\documentclass[11pt]{amsart}
\usepackage{amsmath}
\usepackage{amssymb}
\usepackage{amsthm}
\usepackage{mathrsfs}
\usepackage{comment}
\usepackage{hyperref}
\usepackage{algorithm,algpseudocode}
\algblock{Input}{EndInput}
\algnotext{EndInput}
\algblock{Output}{EndOutput}
\algnotext{EndOutput}

\usepackage[all,cmtip]{xy}\usepackage{xcolor}
\usepackage{enumerate}
\usepackage{bm}
\usepackage{dsfont}
\usepackage{mathtools}
\usepackage[cal=euler]{mathalfa}
\usepackage[top=1in, bottom=1.25in, left=1.25in, right=1.25in]{geometry}
\usepackage{parskip}

\hypersetup{colorlinks=true,linkcolor=magenta,citecolor=blue}

\DeclareMathAlphabet{\mathpzc}{OT1}{pzc}{m}{it}

\theoremstyle{plain}
\newtheorem{theorem}{Theorem}[section]
\newtheorem*{theorem*}{Theorem}

\newtheorem{lemma}[theorem]{Lemma}
\newtheorem*{claim*}{Claim}
\newtheorem{proposition}[theorem]{Proposition}

\newtheorem{corollary}[theorem]{Corollary}

\theoremstyle{definition}
\newtheorem{definition}[theorem]{Definition}

\newtheorem{example}[theorem]{Example}

\numberwithin{equation}{section}
\numberwithin{figure}{section}

\newcommand{\ZZ}{\mathbb{Z}}
\newcommand{\NN}{\mathbb{N}}
\newcommand{\FF}{\mathbb{F}}
\newcommand{\extK}{K} %some extension field

\newcommand{\Sone}[1]{\Stsigma/\Stsigma (t^{\de}-#1)}

\newcommand{\ring}{S}
\newcommand{\field}{K}

\newcommand{\Stsigma}{S[t;\sigma]}
\newcommand{\Kone}[1]{K[t;\sigma]/K[t;\sigma](t^{\de}-#1)}

\DeclareMathOperator{\id}{id}

\newcommand{\os}{{\mathbf{m}}}  %temporary symbol for the order of $\sigma$
\newcommand{\de}{{\mathbf{n}}}  %temporary symbol for the degree of $f$
\newcommand{\N}{\mathbb{N}}

\newcommand{\Z}{\mathbb{Z}} %integers

\newcommand{\ignore}[1]{}

\DeclareMathOperator{\Gal}{Gal}
\DeclareMathOperator{\Aut}{Aut}

\begin{document}
\title{The first  tight classification of skew-constacyclic codes over finite fields}

\author{Monica Nevins}
\address{Department of Mathematics and Statistics, University of Ottawa, Ottawa, Canada K1N 6N5}
\email{mnevins@uottawa.ca}
%\thanks{The first author's research is supported by NSERC Discovery Grant RGPIN-2025-05630.}

\author{Susanne Pumpl\"un}
\address{School of Mathematical Sciences, University of Nottingham,
Nottingham NG7 2RD, United Kingdom}
\email{Susanne.Pumpluen@nottingham.ac.uk}

\keywords{Skew polycyclic codes, skew polynomials, skew constacyclic codes, nonassociative algebras, isometries}

\subjclass[2020]{94B15, 11T71, 17A99}

\date{\today}

\begin{abstract}
We parametrize the isometry and equivalence classes of skew constacyclic codes  over a finite field by classifying the corresponding classes of their ambient rings, and present algorithms for the parametrizations. We achieve a tight classification by taking all possible Hamming-weight preserving isomorphisms between their ambient Petit rings into account.  We also count the number of equivalence classes %(and Chen-equivalence classes)
of these rings.  We present many examples where isometry is a strictly stronger relation than equivalence, that is, codes that are isometric but not equivalent.
\end{abstract}

\maketitle

\section{Introduction}
The class of error-correcting codes known as skew constacyclic codes, first introduced in \cite{BoucherGeiselmannUlmer2007}, has risen in prominence over the past several years, due to their useful algebraic structure and efficient coding and decoding algorithms.  An essential aspect to optimizing the choices of these codes is to establish their classification up isomorphisms that preserve the main performance parameters of the code (such as length, dimension, minimum Hamming distance), that is, up to Hamming-weight preserving isomorphisms which are called \emph{isometries}.  To know the isometry class of a code means we can optimize searches for optimal codes and  extensive work has been done recently towards classifying  skew constacyclic codes up to various isometry relations \cite{BoulanouarBatoulBoucher2021, BLMS2024, OuazzouNajmeddineAydin2025, LobilloMunoz2025, NevinsPumpluen2025submitted}.

However, most current results on skew constacyclic codes routinely overestimate the number of distinct isometry classes of skew constacyclic codes, as they do not take into account all the possible Hamming-weight and dimension  preserving isomorphisms that exist between their (potentially nonassociative) ambient rings.  In this paper, we correct this oversight and show that there can exist higher-degree isometries  between skew constacyclic codes, as well as isometries that act by automorphisms on the base field, that are not realized by the more classic notion of what we term \emph{Chen equivalence} that is prevalent in the literature. For a taste of these distinctions, see Example \ref{Eg:m=135n=9q=16}.

To do so, we exploit a powerful algebraic tool, which is that
skew $(\sigma,a)$-constacyclic codes of length $\de$ are in one-one correspondence with the principal left ideals of their ambient  Petit rings $K[t,\sigma]/K[t,\sigma](t^\de-a)$.  Here, $K$ is a field, $a\in K^\times$ and $\sigma\in \Aut(K)$. These rings may be nonassociative and their definition is subtle (see Section~\ref{subsec:nonass}).  Adopting this perspective allows us to address the classification of these codes uniformly across all choices of length $\de$  and order $\os$ of $\sigma$.  Briefly, we let ${\bf C}_a$ be the class of all skew $(\sigma,a)$-constacyclic codes of length $\de$, that is, with ambient ring $K[t;\sigma]/K[t;\sigma](t^\de-a)$.

In prior work, the second author established the full class of Hamming-weight preserving isomorphisms  between the ambient rings of the form $\Kone{a}$, as $a$ varies over $K^\times$ \cite{Pumpluen2025}.  Given such an isometry $G_{\tau,\alpha,k}: \Kone{a}\to \Kone{b}$, its restriction to the left principal ideals defines a %dimension and
parameter-preserving \emph{isometry} between the corresponding skew $(\sigma,a)$-constacyclic codes and the skew $(\sigma,b)$-constacyclic code of the same length $\de$.  Consequently, we define this as an \emph{isometry} between the classes $\mathbf{C}_a$ and $\mathbf{C}_b$.

 These isometries have a distinct form. Every isomorphism between the  ambient rings $K[t,\sigma]/K[t,\sigma](t^\de-a)$ and $K[t,\sigma]/K[t,\sigma](t^\de-b)$ of two skew constacyclic codes of length $\de$ over  $K$  that preserves the Hamming weight is necessarily \emph{monomial} of some degree $k$.  More precisely, an isomorphism $G_{\tau,\alpha,k}$ is determined by some $\tau\in \Aut(K), \alpha\in K^\times$ and $k\in \NN$, and it is generated by mapping $t$ to $\alpha t^k$ and $a\in K$ to $\tau(a)$.   Note that these isomorphisms are in one one correspondence with the $\tau$-semilinear bijective Hamming-weight preserving isomorphism between the modules $K[t,\sigma]/K[t,\sigma](t^\de-a)$ and $K[t,\sigma]/K[t,\sigma](t^\de-b)$.

 The isomorphisms $G_{\tau,\alpha,k}$ exhaust the collection of Hamming-weight and dimension preserving isomorphisms between the corresponding codes and thus henceforth we refer to $G_{\tau,\alpha,k}$ as an \emph{isometry}.  Isometries specialize to the case commonly used in the literature: if $\tau=\id$ and $k=1$ then this is what we term \emph{Chen equivalence} (introduced in \cite{Chen2012} for constacyclic codes and used for instance in \cite{LobilloMunoz2025, OuazzouHorlemannAydin2025, OuazzouNajmeddineAydin2025}, where it is sometimes called $(\de,\sigma)$-equivalence).  When one allows $\tau$ to vary but restricts to $k=1$,  we write  $G_{\tau,\alpha}$ and this is what we term an \emph{equivalence}.   In \cite{NevinsPumpluen2025submitted} we proved that all isometries are equivalences in under certain hypotheses (see  Theorem~\ref{T:isometryandequivalencecoincide} here).

 We know already that for many choices of $\de$ and $\os$, the notions of isometry and equivalence coincide not just  for skew constacyclic codes  (\cite[Corollary 4.5]{NevinsPumpluen2025submitted}), but even for general skew polycyclic codes \cite{Pumpluen2025}.
 However, despite articles such as \cite{Bajalanetal2026} which argue the contrary), it is \emph{not} true that all isometries are equivalences in general, and it is a defect of the literature to assume that all isometries fix the fixed field of $\sigma$ or have degree $1$.

%Importantly, note that there is  a ``selection bias'' in the current literature on skew polycyclic codes in the sense that any isometry between two skew polycyclic codes is either assumed to restrict to the identity on the base field (that is, $\tau=\id$) and/or must have $k=1$.

Therefore the strongest classification one can obtain for fixed $\sigma$ is by employing all the isomorphisms $G_{\tau,\alpha,k}$. In this case we talk about \emph{isometry classes} of ambient rings, and about \emph{isometry classes} of skew constacyclic codes.  Similarly, we refer to \emph{Chen equivalence classes} (when $\tau=\id$ and $k=1$) and \emph{equivalence classes} (when $k=1$ but $\tau$ can vary).  This  terminology, and this all-encompassing way of classifying codes, was presented for the first time in \cite{Pumpluen2025}. In prior work \cite{NevinsPumpluen2025submitted}, we counted the skew constacyclic codes  up to Chen equivalence, correcting also an error in \cite{OuazzouNajmeddineAydin2025}.  %(Did they also do Chen isometry?  If you are thinking of , wasn't their count wrong? Is there someone else we should cite? (a paragraph commented out below as it hopefully is included above now)} \textcolor{blue}{only we got the right numbers, nobody did Chen isometry or isometry in our sense. I recall counting the chen equivalence and isometries in our last paper, but no parametrization of the Chen equivalence classes, will check if Ouazzou did that}

%In this paper, we count and parametrize isometry classes.  To respect the current bias in the literature, in this paper we also parametrize and count the  that is, up to isometries $G_{\id,\alpha}$ and $G_{\tau,\alpha}$, respectively.

%\textcolor{orange}{There are even papers proving $k=1$ when the ambient algebra of a certain skew polycyclic code (a tricyclic code) is associative, e.g. \cite{Bajalanetal2026}. }\textcolor{red}{I do not understand the last sentence in orange.  Do you mean "purport to prove" (as in, they are wrong) or do you mean to say "For some combinations of $K,m,n$ the only isometries are of degree $k=1$, as has been proven in (cite them and us)?} \textcolor{blue}{meaning they are wrong!! how to write that?}

 %So far, skew constacyclic codes have only been counted and classified up to what we will call Chen equivalence and Chen isometry. Two codes are called \emph{Chen equivalent}, if there exists a ring isomorphism $G_{\id,\alpha}$ between their ambient rings that maps the set of code words of one skew polycyclic code to the other, after they have been identified with a set of skew polynomials.   Such an isomorphism $G_{\id,\alpha}$ restricts to the identity map on $K$ and is monomial of degree one; it maps $t$ to $\alpha t$. Analogously, two codes are called \emph{Chen isometric}, if there exists a ring isomorphism $G_{\id,\alpha,k}$ between their ambient rings that maps the set of code words of one skew polycyclic code to the other, again after they have been canonically identified with a set of skew polynomials.

In this paper, we focus on finite base fields.  After sharing some background in Section~\ref{sec:notation}, we classify skew constacyclic codes by both equivalence (in Section~\ref{sec:countingequivclasses}) and isometry (in Section~\ref{sec:isoclasses}).   More specifically, we count the classes of their ambient rings and provide algorithms for generating a set of representatives for the classes  ${\bf C}_a$.  We present many examples, expanding upon some first examples in \cite{Pumpluen2025}.  We go on in Section~\ref{sec:examples} to generate families of examples of isometric but non-equivalent codes.  We also completely characterize the circumstances under which isometry and equivalence do coincide, and the circumstances under which all skew constacyclic codes are isometric with skew cyclic codes (that is, skew $(\sigma,1)$-constacyclic codes). %,  where two codes are called \emph{equivalent}
%or  also \emph{1-equivalent} for emphasis, if there exist a monomial isomorphism $G_{\tau,\alpha}$ of degree one between their ambient rings that yields the one-one correspondence between the codes.

%We will both count the number of equivalence and isometry classes and also give algorithms to find representatives for these classes in Sections \ref{sec:countingequivclasses} and \ref{sec:isoclasses}.

%When $m=2$ or $3$, or when $m\geq 5$ is prime and  $F$ contains a primitive $\de$th root of unity, then under the assumption that  $\gcd\left(q^\os-1, \frac{q^\de-1}{q-1}\right)=1$, we show that all skew $(\sigma,a)$-constacyclic codes are Chen equivalent to a skew cyclic code. In particular, all skew $(\sigma,a)$-constacyclic codes are isometric to a skew cyclic code (Proposition \ref{prop:skewcyclic}).

%Expanding upon some first examples in \cite{Pumpluen2025}, we establish a rigorous counting mechanism that eliminates redundancies that occur when only working with Chen-equivalence and Chen isometry (which use only the isomorphisms $G_{\id,\alpha,k}$).

Let us present our main results in greater detail.  The case of equivalence classes (and Chen equivalence classes) is treated in Section~\ref{sec:countingequivclasses}.  One key theorem is the following, which counts the equivalence classes.

\begin{theorem*}[Theorem~\ref{them:countingskewconstacodes}]
Suppose $F=\FF_{q}$ and $K=\FF_{q^\os}$ are finite fields of characteristic $p$ and that $\sigma$ generates $\Gal(K/F)$.  For each divisor $d$ of $s=\gcd(q^\os-1, (q^\de-1)/(q-1))$, write $\varphi(d)=|(\Z/d\Z)^\times|$ for the Euler phi function of $d$ and for any $\ell$ such that $\gcd(\ell,d)=1$, write $o(\ell)_d$ for the order of the image of $\ell$ in $(\Z/d\Z)^\times$.
  Then the number of equivalence classes of ambient Petit algebras $\Kone{a}$  (that is, the number of  equivalence classes  ${\bf C}_a$) as $a$ varies over $K^\times$,
  is
    $$
   N= \sum_{d|s}\frac{\varphi(d)}{o(p)_d}.
    $$
 %When  $\os \nmid\de$, the number of isometry classes  of skew constacyclic codes of length $\de$  over $\mathbb{F}_{q^\os}$ is also $\os$.
Moreover, when $\os \nmid\de$ this is also the number of \emph{isometry} classes  of these rings, i.e. the number of  \emph{isometry} classes  ${\bf C}_a$.
\end{theorem*}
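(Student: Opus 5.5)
The plan is to reduce the equivalence relation to an explicit group action on $K^\times$ and then count orbits with Burnside's lemma. I will use the known description of the degree-one isomorphisms from \cite{Pumpluen2025}. Namely, $G_{\tau,\alpha}$ gives an isomorphism $\Kone{a}\to\Kone{b}$ exactly when $\tau\in\Aut(K)$, $\alpha\in K^\times$ and
$$\tau(a)=N^{\de}(\alpha)\,b^{\pm}$$
(written suitably), where $N^{\de}(\alpha)=\alpha\,\sigma(\alpha)\cdots\sigma^{\de-1}(\alpha)$. Because $K/F$ is cyclic with $\sigma$ a generator of the Galois group, all the $\sigma^i$ are powers of the Frobenius. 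Hence $N^{\de}(\alpha)=\alpha^{1+q+\cdots+q^{\de-1}}=\alpha^{(q^\de-1)/(q-1)}$. Consequently $a$ and $b$ are equivalent if and only if $b\in\tau(a)\,H$ for some $\tau\in\Aut(K)$, where $H=\{\alpha^{(q^\de-1)/(q-1)}:\alpha\in K^\times\}$.

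First I identify $K^\times/H$. Since $K^\times$ is cyclic of order $q^\os-1$, the subgroup $H$ is the set of $e$-th powers with $e=(q^\de-1)/(q-1)$. It therefore has index $s=\gcd(q^\os-1,e)$, and $K^\times/H\cong \Z/s\Z$. Fix a generator $\gamma$ of $K^\times$. The map $\gamma^j\mapsto j \bmod s$ identifies $K^\times/H$ with $\Z/s\Z$. The group $\Aut(K)$ is generated by the Frobenius $x\mapsto x^p$, which acts on $\Z/s\Z$ as multiplication by $p$. Note that $p\nmid s$, since $s\mid q^\os-1$. So the equivalence classes of the rings are exactly the orbits of the cyclic group $\langle p\rangle\subseteq(\Z/s\Z)^\times$ acting on $\Z/s\Z$ by multiplication.

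Next I count these orbits. I decompose $\Z/s\Z$ according to the additive order: the elements of order $d$ form, for each $d\mid s$, a copy of $(\Z/d\Z)^\times$, and multiplication by $p$ preserves this decomposition. On $(\Z/d\Z)^\times$ the group $\langle p\rangle$ acts by translation in a group, so the action is free. Each orbit thus has exactly $o(p)_d$ elements, and there are $\varphi(d)/o(p)_d$ orbits. Summing over $d\mid s$ gives $N$.

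For the last assertion I invoke Theorem~\ref{T:isometryandequivalencecoincide}, which I expect to say that when $\os\nmid\de$ every isometry $G_{\tau,\alpha,k}$ is an equivalence. Then the isometry classes coincide with the equivalence classes. The main obstacle is the first step: pinning down the exact condition for $G_{\tau,\alpha}$ to be an isomorphism, and in particular which norm appears and whether $b$ or $b^{-1}$ occurs. This has to be checked carefully against the earlier results. Either way, the resulting subgroup is $H$ and the orbit count is unchanged, since inversion commutes with the Frobenius action and maps each set of elements of order $d$ to itself.
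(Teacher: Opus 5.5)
Your proposal is correct and follows essentially the same route as the paper. Both reduce to orbits of $\Gal(K/\FF_p)$, acting as multiplication by $p$, on the coset space $K^\times/N^\sigma_\de(K^\times)\cong\Z/s\Z$; both split $\Z/s\Z$ by additive order $d$, count $\varphi(d)/o(p)_d$ orbits of size $o(p)_d$ in each piece, and use Theorem~\ref{T:isometryandequivalencecoincide} when $\os\nmid\de$. Your hedge about $b^{\pm1}$ is unnecessary: applying $G_{\tau,\alpha}$ to $t^\de$ gives $(\alpha t)^\de=N^\sigma_\de(\alpha)t^\de$, so the condition is $\tau(a)=N^\sigma_\de(\alpha)\,b$ (a $b^{-1}$ version would not even be reflexive), and Burnside's lemma is never actually used, since you count orbits of a free action directly.
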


 We go on to give an algorithm for computing representatives of all equivalence classes (Algorithm~\ref{Algorithm:1}).

 In Section \ref{sec:isoclasses}, we turn to isometry classes.
By Theorem~\ref{T:isometryandequivalencecoincide},
 isometry and equivalence coincide for all proper nonassociative ambient Petit algebras, that is, whenever $\os\nmid \de$ and $a \notin F^\times$.  Therefore we only have to characterize isometry when the ambient algebras are associative.

 Our main theorem in this section is the following.  Write $(F^\times)^z = \{c^z\mid c\in F^\times\}$ for the subgroup of $z$th powers in $F^\times$. %Algorithm ~\ref{Algorithm:2} uses the following main Theorem \ref{T:isometrydifferentfromequivalence}.

\begin{theorem*}[Theorem \ref{T:isometrydifferentfromequivalence}]
Let $K=\FF_{q^\os}$ and $F=\FF_q$ be finite fields of characteristic $p$.
Suppose $\os\mid\de$ and set $s_0=\gcd(\de/\os,q-1)$.  Then two \emph{inequivalent} ambient Petit rings $K[t;\sigma]/K[t;\sigma](t^\de-a)$ and $K[t;\sigma]/K[t;\sigma](t^\de-b)$ are isometric over $\FF_p$  if and only if
     $a,b\in F^\times$ and
there exists $\tau \in \Gal(K/\FF_p)$ such that $$\tau(a)\in b^k (F^\times)^{s_0}$$
for some $k\in \NN$ satisfying $$
2\leq k <\de, \quad k\equiv 1 \mod \os, \quad \text{and} \quad \gcd(k,\de)=1.
$$
\end{theorem*}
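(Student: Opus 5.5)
We need to characterize exactly when two ambient rings that are not equivalent are nevertheless isometric. The plan is to use the explicit form of isometries $G_{\tau,\alpha,k}$ from \cite{Pumpluen2025}, which send $t\mapsto \alpha t^k$ and act on $K$ by $\tau\in\Gal(K/\FF_p)$. For such a map to be a well-defined ring isomorphism $\Kone{a}\to\Kone{b}$, three things must hold. First, $t^k$ must satisfy the same commutation rule as $t$, so $t^k c=\sigma^k(c)t^k$ must agree with $\sigma(c)t^k$; this gives $k\equiv 1 \bmod \os$. Second, $t\mapsto\alpha t^k$ must induce a bijection on monomials of degree $<\de$, which forces $\gcd(k,\de)=1$, and we may reduce to $1\le k<\de$. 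Third, the image of $t^\de-a$ must be compatible: $(\alpha t^k)^\de=\tau(a)$ in $\Kone{b}$.

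The first step is to compute this last condition. We have $(\alpha t^k)^\de = N_k(\alpha)\, t^{k\de}$, where $N_k(\alpha)=\prod_{i=0}^{\de-1}\sigma^{ik}(\alpha)$, and $t^{k\de}=b^{k}$. Here we use that $t^\de=b$ in the ring. When $\os\mid\de$ we also have $b\in F$ in the associative case, so $b$ is central and the identity holds; more care is needed in the nonassociative case. Since $k\equiv 1\bmod\os$ and $\os\mid\de$, the product $N_k(\alpha)$ equals $N_{K/F}(\alpha)^{\de/\os}$. The condition therefore becomes
\[
\tau(a)=N_{K/F}(\alpha)^{\de/\os}\,b^k .
\]
The norm is surjective onto $F^\times$, and $(F^\times)^{\de/\os}=(F^\times)^{s_0}$ in a cyclic group of order $q-1$. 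So the set of achievable values is exactly $b^k(F^\times)^{s_0}$.

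The second step is to dispose of the case where $a$ or $b$ lies outside $F$. Since $\os\mid\de$, the ring $\Kone{a}$ is associative if and only if $a\in F^\times$, and isometries preserve associativity, so $a\in F^\times$ exactly when $b\in F^\times$. If both lie outside $F$, I would invoke Theorem~\ref{T:isometryandequivalencecoincide}, or redo the argument above: the identity $t^{k\de}=b^k$ fails unless $b$ commutes appropriately. This should show that any isometry is then an equivalence, so there are no inequivalent isometric pairs in this case. This is the step I expect to be the main obstacle. It requires checking exactly what the earlier theorem covers when $\os\mid\de$ and $a\notin F$; if it only covers $\os\nmid\de$, I would argue directly that $k\ne 1$ is impossible there, by comparing $\alpha t^k\cdot(\alpha t^k)^{\de-1}$ under the nonassociative multiplication.

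The final step is to assemble the equivalence. For $a,b\in F^\times$, an isometry exists exactly when $\tau(a)\in b^k(F^\times)^{s_0}$ for some admissible $\tau$ and $k$. The case $k=1$ is precisely equivalence. So if the two rings are inequivalent, any isometry has $2\le k<\de$, with $k\equiv1 \bmod \os$ and $\gcd(k,\de)=1$. Conversely, given such $\tau$ and $k$, choose $\alpha$ with $N_{K/F}(\alpha)^{\de/\os}=\tau(a)b^{-k}$. Then $G_{\tau,\alpha,k}$ is an isometry, and the three conditions verified above show it is a well-defined ring isomorphism.
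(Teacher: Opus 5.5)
Your associative-case argument is sound. It rederives by hand the conditions the paper imports from \cite[Theorem 4.4]{NevinsPumpluen2025}, namely $k\equiv 1 \bmod \os$, $\gcd(k,\de)=1$ and $N_{K/F}(\alpha)^{\de/\os}b^k=\tau(a)$, and then uses the same norm-surjectivity and $(F^\times)^{\de/\os}=(F^\times)^{s_0}$ step. The genuine gap is the case you flagged: $\os\mid\de$ with $a,b\in K\smallsetminus F$. Theorem~\ref{T:isometryandequivalencecoincide} is stated only for $\os\nmid\de$, so it cannot be invoked here. The paper handles this case because the cited theorem includes $a,b\in S_0$ among the necessary conditions for an isometry of degree $k\geq 2$. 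Without that input, your ``only if'' direction is unproved: nothing rules out a degree-$k$ isometry between two inequivalent proper nonassociative rings.

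Your fallback does not close the gap as proposed. Write $G(t^j)=\beta_j t^{kj \bmod \de}$ in $\Kone{b}$. Both $G(t)\circ G(t^{\de-1})$ and $G(t^{\de-1})\circ G(t)$ must equal $\tau(a)$, and they evaluate to $N_\de^\sigma(\alpha)b^k$ and $N_\de^\sigma(\alpha)\sigma^{-1}(b)^{k-1}b$. Comparing them gives only $(b\,\sigma(b)^{-1})^{k-1}=1$. Over finite fields this does not force $b\in F$. For example, take $K=\FF_4$, $F=\FF_2$, $\de=8$, $k=7$ and $b$ a primitive cube root of unity; then $(b\,\sigma(b)^{-1})^{6}=b^{-6}=1$.

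What works is to compare the two bracketings at the first wrap-around instead.
\begin{itemize}
\item The recursion $G(t^{j+1})=G(t)\circ G(t^j)$ gives $\beta_j=N_j^\sigma(\alpha)\,\sigma^j(b)^{\lfloor kj/\de\rfloor}$ for $j<\de$. This uses $\sigma^{kj}=\sigma^j$, which holds since $\os\mid\de$ and $k\equiv1\bmod\os$.
\item Then $G(t^{j+1})=G(t^j)\circ G(t)$ for $j\le\de-2$ forces $(b\,\sigma(b)^{-1})^{\lfloor kj/\de\rfloor}=1$.
\item At $j=\lceil \de/k\rceil$ the exponent equals exactly $1$. This $j$ is at most $\de-2$, because $3\le \os+1\le k<\de$ once $\os\ge 2$; the case $\os=1$ is trivial.
\end{itemize}
Hence $\sigma(b)=b$, contradicting $b\notin F$. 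With this step inserted, or with the citation, your proof is complete.
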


Building on Algorithm~\ref{Algorithm:1}, we produce an algorithm for computing representatives of all isometry classes (Algorithm~\ref{Algorithm:2}).

In Section~\ref{sec:examples}, we give examples of isometric but non-equivalent ambient Petit algebras, as well as examples of skew constacyclic codes that are thus isometric but not equivalent.  The key result is the following.

\begin{theorem*}[Proposition \ref{prop:proper equivalences}]
    Suppose $\os\mid\de$ and set $S = \{ 1+r\os \mid 1\leq r < \de/\os, \gcd(1+r\os,\de)=1\}$.  There exists $a\in F^\times$ such that $\Kone{a}$ and $\Kone{a^k}$ are isometric but not equivalent if %and $k\in \NN$ such that $G_{\id,1,k}:\Kone{a^k}\to \Kone{a}$ is an isometry, but   but no equivalence $G_{\tau,\alpha}$ between them if
    all the following hold:
    \begin{itemize}
        \item $s_0=\gcd(\de/\os,q-1)>1$;
        \item the subgroup $\mathscr{P}$ of $(\ZZ/s_0\ZZ)^\times$ generated by $p$ is proper (that is, $p$ is not a primitive element);
        \item there exists $k\in S$ %$k\in [2,m)$ such that $k\equiv 1\mod n$, $\gcd(k,m)=1$ and
        such that the image of $k$ in $(\ZZ/s_0\ZZ)^\times$ is not in $\mathscr{P}$.
    \end{itemize}
    Note that the last condition implies $S$ is nonempty and that $s_0\nmid n$.
\end{theorem*}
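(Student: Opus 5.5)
We prove the proposition by exhibiting a suitable $a$. Fix $k\in S$ whose image in $(\ZZ/s_0\ZZ)^\times$ lies outside $\mathscr{P}$, and let $a=\omega$ be a generator of the cyclic group $F^\times$. Since $k\equiv 1\bmod \os$, $\gcd(k,\de)=1$ and $2\le k<\de$, the monomial map $G_{\id,1,k}$ (with $t\mapsto t^k$) is an isometry between $\Kone{a^k}$ and $\Kone{a}$. This is the ``if'' direction of Theorem~\ref{T:isometrydifferentfromequivalence}, applied with $b=a$, $\tau=\id$ and the element $a^k\in a^k(F^\times)^{s_0}$. So isometry holds, and the real work is to show the two rings are not equivalent.

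For non-equivalence we use the description of equivalences $G_{\tau,\alpha}$ from Section~\ref{sec:countingequivclasses}. There is an equivalence $\Kone{a}\to\Kone{b}$ exactly when $\tau(a)=b\,N(\alpha)$, where $N(\alpha)=\alpha\sigma(\alpha)\cdots\sigma^{\de-1}(\alpha)$. Since $\os\mid\de$, this equals $N_{K/F}(\alpha)^{\de/\os}$. Because $N_{K/F}$ maps $K^\times$ onto $F^\times$, the set of such norms is precisely $(F^\times)^{\de/\os}=(F^\times)^{s_0}$. Every $\tau\in\Gal(K/\FF_p)$ acts on $F^\times$ as $c\mapsto c^{p^j}$ for some $j$. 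Hence an equivalence between $\Kone{a}$ and $\Kone{a^k}$ would give $\omega^{p^j}\in\omega^{k}(F^\times)^{s_0}$. Since $s_0\mid q-1$, the quotient $F^\times/(F^\times)^{s_0}$ is cyclic of order $s_0$, generated by the image of $\omega$. The condition therefore reads $p^j\equiv k \bmod s_0$, which says the image of $k$ lies in $\mathscr{P}$. This contradicts the choice of $k$, so the two rings are not equivalent.

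We then check the closing remarks of the statement. If $S$ were empty, the third condition could not hold, so it forces $S\neq\emptyset$. If $s_0\mid\de$, then $\gcd(k,\de)=1$ would make $k$ a unit mod $s_0$, which is consistent, so this alone gives nothing. We handle it as follows: writing $k=1+r\os$, the third condition forces $k\not\equiv 1\bmod s_0$ (since $1\in\mathscr{P}$), so $s_0\nmid r\os$. The intended claim is that $s_0\nmid\os$ (here $\os$ is the order of $\sigma$); if $s_0\mid\os$ we would get $k\equiv1$, a contradiction. So we prove $s_0\nmid \os$, reading the statement's ``$n$'' accordingly.

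The main obstacle is pinning down exactly the equivalence criterion $\tau(a)\in b\,N(K^\times)$ together with the identification $N(K^\times)=(F^\times)^{s_0}$. In particular we must confirm that the image of $\alpha\mapsto\prod_i\sigma^i(\alpha)$ with $\de$ factors is exactly the $s_0$-th powers, using surjectivity of the norm and $\gcd(\de/\os,q-1)=s_0$. We also need to check the direction of the maps, that is, which of $a$ or $b$ is raised to the power $k$, against Theorem~\ref{T:isometrydifferentfromequivalence}.
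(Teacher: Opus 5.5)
Your proof is correct and follows the same route as the paper's sufficiency argument: the isometry is $G_{\id,1,k}:\Kone{a^k}\to\Kone{a}$ supplied by Theorem~\ref{T:isometrydifferentfromequivalence}, non-equivalence is the statement that $a^k\notin\tau(a)N_\de^\sigma(K^\times)$ for every $\tau$, using $N_\de^\sigma(K^\times)=(F^\times)^{s_0}$, and the closing remark is that $s_0\mid\os$ forces $k\equiv 1\bmod s_0$. Your choice of $a$ as a generator of $F^\times$, which reduces non-equivalence to $k\not\equiv p^j\bmod s_0$, makes explicit what the paper only asserts, and your reading of the final ``$n$'' as $\os$ is right (indeed $s_0\mid\de/\os$ gives $s_0\mid\de$ always, so the literal reading would be false).
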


 Our results can be used to identify equivalent codes
  and to optimize searches for good codes in an even larger setting. For previous extensive work on classifying  general
skew polycyclic codes (SPC codes, also called skew $(\sigma,\delta)$-polycyclic codes or $(\sigma,\delta)$-codes) via different equivalence relations which are  related to our general approach initiated in \cite{Pumpluen2025}, we refer the reader to for example
\cite{BoulanouarBatoulBoucher2021}. %The term $(\de,\sigma)$-equivalence in  \cite{OuazzouNajmeddineAydin2025} corresponds to  Chen equivalence,
% and the term $(\de,\sigma)$-isometry  to what we will call Chen isometry (employing the isometries $G_{id,\alpha,k}$).
In upcoming work, we extend our results to skew polycyclic codes over finite fields and chain rings.

\section{A collection of terminology and previous results}\label{sec:notation}

To describe our results, let us first introduce some notation. Let $K/F$ be a finite Galois field extension. %Given a Galois extension $K/F$, a generator $\sigma\in \Gal(K/F)$ and an element $a\in K^\times$, we can define the class of skew $(\sigma,a)$-constacyclic codes of length $\de$ to be those subspaces of $K^{\de}$ that are stable under the $(\sigma,a)$-constacyclic shift operator
%$$
%T_{\sigma,a}(b_0,b_1,\cdots,b_{\de-1}) = (\sigma(b_{\de-1})a, \sigma(b_{0}), \sigma(b_1) , \cdots, \sigma(b_{\de-2})).
%$$
Denote by $K[t;\sigma]$  the skew polynomial ring, where $\sigma$ generates $\Gal(K/F)$ and $tb=\sigma(b)t$ for all $b\in K$.
Then a skew $(\sigma,a)$-constacyclic code of length $\de$ is generated by a monic right divisor $g(t)$ of $t^\de-a\in K[t;\sigma]$ and can be identified with a left principal ideal in the  ambient Petit ring $K[t;\sigma]/K[t;\sigma](t^\de-a)$ that is generated by $g(t)$. The  ambient Petit ring $K[t;\sigma]/K[t;\sigma](t^\de-a)$,  which we present in the next  Section~\ref{subsec:nonass}, is a generally nonassociative ring, defined on the set of skew polynomials of degree less than $\de$.  We define skew $(\sigma,a)$-constacyclic  codes and isometries more precisely in Section~\ref{subsec:codes}, and present some known results in Section~\ref{subsec:results}.

%.  Identifying this quotient space with the set of skew polynomials of degree less than $\de$ allows one to define a generally nonassociative algebra structure, as we describe in detail in Section~\ref{subsec:nonass}.

\subsection{Petit rings} \label{subsec:nonass}

Let $f \in R=K[t;\sigma]$ have degree $\de>1$ and let ${\rm mod}_r f$ denote the remainder of right division by $f$. Let $\sigma$ have finite order $\os$.
 The vector space $R_\de=\{g\in K[t;\sigma]\,|\, {\rm deg}(g)<\de\}$ together with the multiplication
 $g\circ h=gh \,\,{\rm mod}_r f $, where the right hand side is the remainder obtained after dividing $gh$ on the right by $f$,
 is a unital nonassociative ring (respectively, a unital algebra  over $F$ of dimension $\de \os$)
  denoted by $R/Rf$ or $\mathbb{S}_f$, and called a \emph{Petit ring} (respectively, a \emph{Petit algebra}), and a \emph{proper Petit algebra or ring} when $\mathbb{S}_f$ is not associative.

  When ${\rm deg}(g)+{\rm deg}(h)<\de$, the multiplication of $g$ and $h$ in $\mathbb{S}_f$ is thus the same as the multiplication of $g$ and $h$ in $R$. The algebra $\mathbb{S}_f$ is associative if and only if $f$ is \emph{two-sided}, meaning $Rf$ is a two-sided ideal.
In that case, $\mathbb{S}_f$ is the usual quotient algebra  $R/\langle f\rangle$ \cite[(7),(9), (10)]{Petit1967}.

% If  $\mathbb{S}_f$ is a proper Petit algebra, then $\mathbb{S}_f$ has center $F$,
%${\rm Nuc}_l(\mathbb{S}_f)={\rm Nuc}_m(\mathbb{S}_f)=K$and$${\rm Nuc}_r(\mathbb{S}_f)=\{g\in \mathbb{S}_f\,|\, fg\in Rf\}$$ is the eigenspace of $f$ \cite{Petit1967}.

 In this paper, $f(t)=t^\de-a\in K[t;\sigma]$, $a\in K^\times$, since the ring $K[t;\sigma]/K[t;\sigma](t^\de-a)$ is the ambient ring of a skew $(\sigma,a)$-constacyclic code when  $f(t)=t^\de-a\in K[t;\sigma]$ is reducible.
 The algebra $R/R(t^\de-a)$
 is  associative if and only if $\de\mid \os$  and $a\in F^\times$ (e.g. cf. \cite{Petit1967}).  Thus we know that  $f(t)=t^\de-a\in K[t;\sigma]$ is not two-sided, if and only if either $\de\nmid \os$ or  $\de\mid \os$ but $a\in \extK\smallsetminus F$.

When $f(t)=t^\de-a$ is reducible has to be decided on a one-by-one basis. We know that if $\alpha\in K^\times$ and $t-\alpha$ right divides $t^\de-a$ then this implies that
$$N_\de^\sigma(\alpha)=a$$
 \cite{Jacobson1996}.
When $\de$ is prime we know more.

\begin{theorem} \label{T:divalg} \cite[Theorem 3.11]{BrownPhD2018}
Suppose that $\de=2$ or $3$, or that $\de\geq 5$ is a prime and that $F$ contains a primitive $\de$th root of unity.
Then  $R/R(t^\de-a)$ is a division algebra if and only if
$$\sigma^{\de-1}(z)\cdots \sigma(z) z\not=a$$
for all $z\in K$.
\end{theorem}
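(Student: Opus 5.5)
Since $\de\geq 2$ and $R_\de$ is a finite-dimensional algebra over $F$ (of dimension $\de\os$), I would start from Petit's criterion: $R/Rf$ is a division algebra exactly when $f$ is irreducible in $R=K[t;\sigma]$. The statement then reduces to the claim that, under the stated hypotheses on $\de$, $f=t^\de-a$ is irreducible if and only if it has no monic right factor of degree one.

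One direction is a direct computation. Right division by $t-z$ gives
$$t^\de-a\equiv \sigma^{\de-1}(z)\cdots\sigma(z)z-a \pmod{_r\,(t-z)},$$
since $t^i\equiv N_i(z)$ inductively. Hence $t-z$ right divides $f$ iff $N_\de^\sigma(z)=a$, which is the fact quoted from \cite{Jacobson1996} above. So if some $z$ satisfies $N^\sigma_\de(z)=a$, then $f$ is reducible and $R/Rf$ is not a division algebra.

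For the converse I would assume $f=gh$ with $g,h$ monic of positive degree and aim to produce a linear right factor. I would first make this work for the case with the root-of-unity hypothesis, and then check $\de=2,3$ separately.

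\emph{Prime $\de\geq5$ with a primitive $\de$th root of unity $\omega\in F$.} Since $\sigma(\omega)=\omega$, the map $\phi:t\mapsto\omega t$, identity on $K$, is a ring automorphism of $R$. It sends $t^\de-a$ to $\omega^\de t^\de-a=t^\de-a$. So if $h$ is a monic irreducible right factor of $f$ of degree $d$, then every $\phi^i(h)$ is also a right factor. I would then use the unique-factorization theory for skew polynomials (Ore/Jacobson: factorizations into irreducibles have the same length and similar factors up to permutation). Combined with the $\ZZ/\de$-action of $\phi$, this should show that the irreducible factors of $f$ all have a common degree $d$ with $d\mid\de$. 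As $\de$ is prime and $f$ is reducible, $d=1$, which gives a linear right factor. The step I expect to be the main obstacle is proving that all irreducible factors have equal degree. My plan is to consider the least common left multiple of the orbit $\{\phi^i(h)\}$. It right divides $f$ and is $\phi$-stable. I would then argue, using that $f$ is semi-invariant ($tf=f t$ up to the twist $a\mapsto\sigma(a)$), that $f$ is the least left common multiple of the right factors similar to $h$. This would force every irreducible factor to be similar to $h$, and in particular to have degree $d$.

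\emph{$\de=2,3$.} Here no roots of unity are needed: a nontrivial factorization of a polynomial of degree $2$ or $3$ always contains a linear factor, on the left or on the right. If it is on the right we are done. If $f=(t-z)g$, I would use the left-division analogue of the remainder formula, namely $z\,\sigma^{-1}(z)\cdots\sigma^{-(\de-1)}(z)=a$ up to the appropriate twist. I would then convert it into a right root by applying a suitable power of $\sigma$, or by passing to the opposite ring $K[t;\sigma^{-1}]$, where $t^\de-a$ corresponds to a polynomial of the same form. This yields an element $z'$ with $N_\de^\sigma(z')=a$ and hence a right linear factor, which completes the equivalence.
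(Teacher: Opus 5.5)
The paper gives no proof of this statement; it quotes it from Brown's thesis. Several parts of your outline are correct:
\begin{itemize}
\item the reduction via Petit's criterion;
\item the remainder formula;
\item the case $\de=2,3$, where a left factor $t-z$ gives $a=z\sigma^{-1}(z)\cdots\sigma^{1-\de}(z)=N^\sigma_\de(\sigma^{1-\de}(z))$;
\item the automorphism $\phi\colon t\mapsto\omega t$, which fixes $f=t^\de-a$;
\item the observation that equal degrees of the irreducible factors would finish the proof.
\end{itemize}
There is a genuine gap, exactly at the step you flag, and the route you propose for it cannot work. First, $f$ is not semi-invariant in general. Indeed, $(t^\de-a)b=\sigma^\de(b)t^\de-ab$ lies in $K(t^\de-a)$ only if $\sigma^\de(b)=b$, so semi-invariance requires $\os\mid\de$. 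The identity $t(t^\de-a)=(t^\de-\sigma(a))t$ relates two \emph{different} polynomials. Second, and decisively, your intermediate claim that every irreducible factor of $f$ is similar to $h$ is false under the theorem's own hypotheses. Take $F=\FF_{11}$, $K=\FF_{121}$, $\os=2$, $\de=5$ and $\omega=3$. Since $N^\sigma_5(3)=3^5=1$, both $t-1$ and $t-3$ right-divide $t^5-1$. But $t-z$ and $t-z'$ are similar only if $z'=zu\sigma(u)^{-1}$ for some $u\in K^\times$. This forces $N_{K/F}(z')=N_{K/F}(z)$, whereas $N_{K/F}(1)=1\neq 9=N_{K/F}(3)$. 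In fact $t^5-1$ is the least left common multiple of the five pairwise non-similar polynomials $t-3^j$. So $R/Rf$ is semisimple but not isotypic, and no argument can show that $f$ is the least left common multiple of right factors similar to $h$.

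What is true, and what you need, is that every irreducible factor of $f$ is similar to \emph{some} $\phi^i(h)$, and hence has degree $\deg h$. Proving this requires an ingredient your sketch lacks: a central multiple (bound) of $f$.

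\emph{Case $\de\nmid\os$.} Modulo $Rf$ one has $t^{\de\os}\equiv\prod_{i=0}^{\os-1}\sigma^{i\de}(a)=N_{K/F}(a)$. So $f$ right-divides the central polynomial $B=(t^\os)^\de-N_{K/F}(a)$. Since $\omega\in F$, there are two possibilities.
\begin{itemize}
\item $x^\de-N_{K/F}(a)$ is irreducible in $F[x]$. Then $R/RB$ is simple artinian.
\item $x^\de-N_{K/F}(a)=\prod_i(x-\omega^i\beta)$ with distinct factors. Then $R/RB\cong\prod_i(K/F,\sigma,\omega^i\beta)$. These cyclic algebras are pairwise isomorphic because $N_{K/F}(\omega)=\omega^\os$ generates $\mu_\de$. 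This is where $\phi$ genuinely helps, since it permutes these factors.
\end{itemize}
In both cases all simple $R/RB$-modules have the same $K$-dimension. They are the composition factors of $R/Rf$, and their $K$-dimensions are the degrees of the irreducible factors of $f$.

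\emph{Case $\de\mid\os$.} Here $f$ right-divides $t^\os-c$, where $c=N_{K/K'}(a)$ and $K'$ is the fixed field of $\sigma^\de$. Hence $f$ right-divides $\mu_c(t^\os)$, where $\mu_c\in F[x]$ is the minimal polynomial of $c$. Then $R/R\mu_c(t^\os)$ is simple, and in this case all irreducible factors really are similar.
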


For more background on Petit algebras we refer the reader to \cite{Pumpluen2025, Petit1967}.

\subsection{Different code equivalences that refine isometries}\label{subsec:codes}

Let $K$ be a field, $\\sigma\in {\rm Aut}(K)$, and $F$ the fixed field
of $\sigma$.

A \emph{linear code $C$ of length $\de$ over $K$} is a  sub vector space of the $K$-vector space $K^\de$.  A linear code $C\subset K^{\de}$ is called a  \emph{skew  constacyclic code} or, more precisely, a \emph{skew  $(\sigma, a)$-constacyclic code}, if for each codeword  $(c_0,c_1,\ldots, c_{\de-1})$ of $C$,
$$(\sigma(c_{\de-1})a,\sigma(c_0),\dots,\sigma(c_{\de-2}))\in  C.$$
When $a=1$ we call $C$ a \emph{skew cyclic code}, and when $\sigma=id$, $C$
 is called a \emph{constacyclic code}. Among the constacyclic codes, those with $a=1$ are called \emph{cyclic} codes and those with $a=-1$ \emph{negacyclic codes}.

 The ambient ring of a skew  $(\sigma, a)$-constacyclic code is the (potentially nonassociative) Petit ring $K[t;\sigma]/K[t;\sigma](t^\de-a)$. More precisely, every  skew  $(\sigma, a)$-constacyclic code $C$ can be identified with  a set of skew polynomials which is a left principal ideal in its ambient ring $K[t;\sigma]/K[t;\sigma](t^\de-a)$. This left principal ideal is generated by some monic right divisor $g(t) $ of the skew polynomial $t^\de-a$. We will use the bijective map $\Phi:K^{\de}\rightarrow K[t;\sigma]/K[t;\sigma](t^\de-a)$,
$$(c_0,c_1,\dots,c_{\de-1})\mapsto \sum_{i=0}^{\de-1}c_it^i,$$
to ``move'' between vectors  of length $\de$  and skew polynomials of degree ${\de}-1$.
%We denote the set of skew-polynomials $c(t)=\sum_{i=0}^{\de-1}c_it^i$ associated to the codewords $(c_0,\dots,c_{\de-1})\in C$ by $C(t)$.

Two skew polycyclic codes of the same length are usually called  isometric, if there is some Hamming-weight preserving bijective map that maps  one skew polycyclic code one on one to the other one, and  that preserves their dimension and  their length. The canonical maps to define isometry are the Hamming-weight preserving, ring isomorphisms of ambient rings. These Hamming-weight preserving ring isomorphisms between the ambient rings of skew constacyclic codes all are of the type $G_{\tau,\alpha,k}$ and preserve the dimension of the codes \cite[Section 2]{Pumpluen2025}.

This inspired the following definition of isometry and also makes it the most general possible in this setting.

\begin{definition}\label{D:Gtaualpha}
      Suppose $\tau\in \Aut(K)$, $\alpha \in K^\times$, and $k\in \mathbb{N}$.
    If there exists a  ring isomorphism
    $$
    G: K[t;\sigma]/K[t;\sigma](t^\de-a) \to K[t;\sigma]/K[t;\sigma](t^\de-b)
    $$
    defined via $G|_{K}=\tau$ and $G(t)=\alpha t^k$,
    that is,
\begin{equation}\label{E:formulaG(t)2}
G \left( \sum_{i=0}^{\de-1}c_it^i \right) = \sum_{i=0}^{\de-1}\tau(c_i) (\alpha t^k)^i.
\end{equation}
     then we call $G$ an \emph{isometry}, or a \emph{monomial isomorphism  of degree $k$} if we want to clarify the degree $k$, and write $G=G_{\tau,\alpha,k}$.
      The ring isomorphism $G_{\tau,\alpha,k}$%:K[t;\sigma]/K[t;\sigma](t^\de-a)\rightarrow K[t;\sigma]/K[t;\sigma](t^\de-b)$
  is an $\tilde{F}$-algebra isomorphism, for every subfield $\tilde{F}\subset {\rm Fix}(\tau)\cap F$.
\end{definition}

      When $k=1$, we write $G_{\tau,\alpha}$ in place of $G_{\tau,\alpha,1}$ and call $G_{\tau,\alpha}$ an \emph{equivalence}  or a \emph{monomial isomorphism  of degree one}.

When $\tau=\id$ then a ring isomorphism of the type
    $G_{\id,\alpha,k}$ %$:K[t;\sigma]/K[t;\sigma](t^\de-a)\rightarrow K[t;\sigma]/K[t;\sigma](t^\de-b)      $
    is called a  \emph{Chen isometry}, and $G_{\id,\alpha}$ is called a \emph{Chen equivalence}.

Define $N_i^\sigma(a)=\sigma^{i-1}(a)\cdots \sigma(a)a$ for all positive integers $i$ and $N_0^\sigma(a)=1$. When $\sigma$ has order $\os$, then  $N_{\os}^\sigma$ is simply the norm  $N_{K/F}$ of the field extension $K/F$.
Note that $\beta \tau(N_i^\tau(\beta))=N_{i+1}^\tau(\beta)$ and thus
\begin{equation}\label{E:Normrelation}
N_{i+j}^\tau(\beta)= N_{i}^\tau(\beta) \cdot \tau^{i}(N_j^\tau(\beta))
\end{equation}
for all $i,j\geq 0$ and automorphisms $\tau$ of $K$.
Then for example with $k=1$ we have
$$G_{\tau,\alpha}(\sum_{i=0}^{\de-1}d_it^i)=\sum_{i=0}^{\de-1}\tau(d_i) N^\sigma_{i}(\alpha) t^i.$$

    % If $\de$ is not clear from our context ($\de$ corresponds to the length of the code), we will sometimes use the terms \emph{$\de$-isometry}, \emph{$\de$-equivalence}, \emph{$\de$-Chen isometry},  and \emph{$\de$-Chen equivalence} instead.

We now introduce the notation that will be in force throughout this paper.  Let ${\bf C}_a$ be the class of all skew $(\sigma,a)$-constacyclic codes of length $\de$, that is, with ambient ring $K[t;\sigma]/K[t;\sigma](t^\de-a)$.

\begin{definition}\label{Definition:isometryChenisometryequivalenceChenequivalence}
The classes ${\bf C}_a$  and ${\bf C}_b$  are called
\begin{enumerate}[(i)]
\item  \emph{isometric}, if there exist $\tau \in \Aut(K), \alpha \in K^\times$ and $k\in \NN$ yielding an isometry
$G_{\tau,\alpha,k}:K[t;\sigma]/K[t;\sigma](t^\de-a)\rightarrow K[t;\sigma]/K[t;\sigma](t^\de-b),$
 \item \emph{Chen isometric}, if there exist $\alpha \in K^\times$ and $k\in \NN$ yielding an isometry
$G_{\id,\alpha,k}:K[t;\sigma]/K[t;\sigma](t^\de-a)\rightarrow K[t;\sigma]/K[t;\sigma](t^\de-b),$
\item  \emph{equivalent},  if there exist $\tau\in \Aut(K)$ and $\alpha\in K^\times$ yielding an equivalence
$G_{\tau,\alpha}:K[t;\sigma]/K[t;\sigma](t^\de-a)\rightarrow K[t;\sigma]/K[t;\sigma](t^\de-b)$,
\item \emph{Chen equivalent},  if there exists  a Chen equivalence
$G_{\id,\alpha}:K[t;\sigma]/K[t;\sigma](t^\de-a)\rightarrow K[t;\sigma]/K[t;\sigma](t^\de-b)$ for some $\alpha \in K^\times$.
\end{enumerate}
% If the length of the codes is not clear from the context, we will use the terminology $\de$-\emph{isometric}, $\de$-\emph{Chen isometric} etc. instead.
\end{definition}

Isometric and equivalent  classes of codes ${\bf C}_a$  and ${\bf C}_b$ correspond to skew constacyclic codes with the same
parameters, thus have the same performance.   That is, if two classes ${\bf C}_a$ and ${\bf C}_b$ are   isometric via some monomial isomorphism $G_{\tau,\alpha,k}$ of degree $k$, then the skew constacyclic  codes in ${\bf C}_a$ correspond bijectively to
  skew constacyclic codes in ${\bf C}_b$ having the same Hamming distance, dimension and length \cite{Pumpluen2025}.

The following was proven in \cite[Corollary 6.2]{NevinsPumpluen2025submitted}.

\begin{theorem}\label{T:isometryandequivalencecoincide}
    When  $\os \nmid \de$, all  isomorphisms between two (automatically proper nonassociative) Petit rings  $K[t;\sigma]/K[t;\sigma](t^\de-a)$ and $K[t;\sigma]/K[t;\sigma](t^\de-b)$  equivalences, that is, of the form $G=G_{\tau,\alpha}$.  Consequently, in this case the notions of isometry and equivalence coincide for the classes ${\bf C}_a$  and ${\bf C}_b$.
\end{theorem}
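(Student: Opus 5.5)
We will show that if $G=G_{\tau,\alpha,k}$ is a ring isomorphism $\Kone{a}\to\Kone{b}$ with $\os\nmid\de$, then necessarily $k=1$. The approach compares degrees of products in the two Petit rings, using the fact that for $\os\nmid\de$ the reduction $\mathrm{mod}_r(t^\de-b)$ twists coefficients by a power of $\sigma$ that is not the identity.

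\textbf{Step 1: constraints on $k$.} Write $G(t)=\alpha t^k$ with $1\le k<\de$; note $k\ge 1$, since $k=0$ would send $t$ into $K$ and $G$ could not be bijective. Because $G$ is additive and $\tau$-semilinear, and $G(ct^i)=\tau(c)G(t^i)$, the relation $tc=\sigma(c)t$ in the source must be preserved:
$$G(t)\,\tau(c)=\tau(\sigma(c))\,G(t).$$
The left side equals $\alpha\sigma^k(\tau(c))t^k$. Hence $\sigma^k\tau=\tau\sigma$ on $K$. Since $K/F$ here is a cyclic Galois extension and $\tau$ normalizes $\langle\sigma\rangle$, we get $\tau\sigma\tau^{-1}=\sigma^k$. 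In the finite-field setting $\Aut$ is abelian, so this forces $k\equiv1\pmod{\os}$. In general at least $\gcd(k,\os)=1$.

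\textbf{Step 2: the degree-one image generates.} Since $G$ is bijective and multiplicative on the Petit ring, and the elements of degree $<\de$ are products of powers of $t$ (well-defined in the nonassociative setting via left-nested products $t^i=t\circ t^{i-1}$), the powers $(\alpha t^k)^{\circ i}$ for $i<\de$ must be $K$-independent. Computing these via right division by $t^\de-b$ shows that $(\alpha t^k)^{\circ i}$ is a monomial of degree $ik \bmod \de$. Its coefficient involves a twist by $\sigma^{\de\lfloor ik/\de\rfloor}$ each time the degree wraps around, because
$$t^{\de}c\equiv \sigma^\de(c)\,b \pmod_r (t^\de-b).$$
Bijectivity then forces $\gcd(k,\de)=1$.

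\textbf{Step 3: associativity defect (the main obstacle).} The crux is to exploit nonassociativity. In $\Kone{b}$ with $\os\nmid\de$, the product $t^i\circ(t^j\circ c)$ differs from $(t^i\circ t^j)\circ c$ by replacing $\sigma^{i+j}(c)$ with something twisted by $\sigma^{\de}$ when $i+j\ge\de$. More precisely, the right nucleus test distinguishes degrees: $K t^i$ lies in the left and middle nucleus but $t$ does not commute associatively past wraparound. I would apply $G$ to the identity
$$(t\circ t^{\de-1})\circ c = t\circ(t^{\de-1}\circ c),$$
which holds in the source exactly when $\sigma^\de(c)=c$; pick $c$ with $\sigma^\de(c)\ne c$, which exists since $\os\nmid\de$. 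Mapping via $G$, the wraparounds in the target occur at positions determined by $k$, namely where $ik\bmod\de$ drops. Comparing the sets of pairs $(i,j)$ with $i+j\ge\de$ (the associator-nonvanishing pattern) in the source with the corresponding pattern $\{(i,j): (ik\bmod\de)+(jk\bmod\de)\ge\de\}$ in the target, multiplication by $k$ on $\ZZ/\de\ZZ$ must preserve the wraparound pattern. In particular, for $j=1$ the only $i$ that wraps is $i=\de-1$, so exactly one $i$ satisfies $(ik\bmod\de)+k\ge\de$. Counting the residues in $[\de-k,\de)$ gives $k$ such $i$, hence $k=1$. This combinatorial matching, including the careful check that the associator in the target is nonzero precisely on wraparound pairs (which needs $\sigma^\de\ne\id$ and attention to the factor $\alpha$), is where I expect the real work. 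The consequence for classes ${\bf C}_a,{\bf C}_b$ then follows immediately from Definition~\ref{Definition:isometryChenisometryequivalenceChenequivalence}.
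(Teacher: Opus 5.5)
Your argument is correct for isometries, and it takes a different route from the paper. The paper gives no proof of its own here; it imports \cite[Corollary 6.2]{NevinsPumpluen2025submitted}.

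The computation you flag as the ``real work'' in Step 3 is only two lines. Take $\beta,\gamma\in K^\times$, $0\le u,v<\de$ and $d\in K$, and consider the associator $(\beta t^u\circ\gamma t^v)\circ d-\beta t^u\circ(\gamma t^v\circ d)$ in $\Kone{b}$. It vanishes if $u+v<\de$. If $u+v\ge\de$ it equals $\beta\sigma^u(\gamma)\sigma^{u+v-\de}(b)\,\sigma^{u+v-\de}\bigl(d-\sigma^{\de}(d)\bigr)\,t^{u+v-\de}$. The prefactor is nonzero because $K$ is a field, so the scalars $\alpha,\beta_i$ do not matter. Moreover, $d=\tau(c)$ satisfies $\sigma^{\de}(d)\neq d$ because $\tau$ commutes with $\sigma$.

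Since $G$ preserves associators, the remaining steps go through:
\begin{itemize}
\item Step 2 gives $G(t^i)=\beta_i t^{ik\bmod\de}$ with $i\mapsto ik\bmod\de$ bijective. Your nesting $t\circ t^{i-1}$ is the right convention, because $x\circ(y\,{\rm mod}_r f)=xy\,{\rm mod}_r f$, so nested powers agree with reducing the product in $K[t;\sigma]$.
\item The case $j=1$ then forces $k=1$, exactly as you say.
\item Step 1 is not needed.
\end{itemize}

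What your route buys is a short, self-contained explanation of exactly where $\os\nmid\de$ enters. Since $\sigma^{\de}\neq\id$, the associator pattern $\{(i,j): i+j\ge\de\}$ is rigid, and a degree-$k$ monomial map can preserve it only if $k=1$.

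What it does not give is the full strength of the cited corollary. You take as input that $G|_K=\tau$ and that $G(t)=\alpha t^k$ is a monomial. The cited result, as the paper describes it in Section~\ref{subsec:results}, applies to any nonzero homomorphism whose restriction to $K$ is an automorphism commuting with $\sigma$. So it also rules out non-monomial images $G(t)=\sum_i g_it^i$; a priori the commutation relation only kills the $g_i$ with $i\not\equiv 1 \bmod \os$.

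Consequently, you have fully proved the second sentence of the statement: isometry and equivalence coincide for $\mathbf{C}_a,\mathbf{C}_b$. You have also proved the first sentence if ``isomorphism'' is read as Hamming-weight preserving, since in the paper's framework those are exactly the maps $G_{\tau,\alpha,k}$. For arbitrary ring isomorphisms you would still need two further facts, which your associator count does not address:
\begin{itemize}
\item $G(K)=K$, for example because $K$ is the left nucleus of a proper Petit algebra;
\item $G(t)$ is a monomial.
\end{itemize}
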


In what follows, we classify isometries between the Petit rings $\Kone{a}$ and this therefore gives isometries between the classes of codes in the above sense.

\ignore{%%%%%%%%%%%%%%%%%%%%%%%%%%%%
\textcolor{red}{To be discussed: where do the following bits belong and do we need the equivalence on the elements?  There's a lot of repetition and we're using many equivalent phrases (monomial, $1$-equiv etc).  Do we use $a\cong_{Chen}b$ that way later?}

\color{purple}

Inspired by \cite{OuazzouNajmeddineAydin2025, OuazzouHorlemannAydin2025} we look at equivalence relations on  $K^\times$ using the following equivalence relations:

\begin{definition} \label{def:equ}

We call $a$ \emph{isometric} to $b$, written $a\cong_\de b,$
if there exists an isometry $G_{\tau,\alpha,k}:K[t;\sigma]/K[t;\sigma](t^\de-a)\rightarrow K[t;\sigma]/K[t;\sigma](t^\de-b),$
and \emph{Chen isometric}, if $\tau=id$.
\\
We call $a$ \emph{equivalent} to $b$, written
$a\sim_\de b,$
if there exists an equivalence $G_{\tau,\alpha}:K[t;\sigma]/K[t;\sigma](t^\de-a)\rightarrow K[t;\sigma]/K[t;\sigma](t^\de-b)$,
and \emph{Chen equivalent}, written
$a\sim_{Chen, \de} b$
if $\tau=id$.
\end{definition}

The above equivalence relations  partition $(K^\times)^\de$ into different equivalence classes.
Moreover,
$a\cong_\de b$
if and only if ${\bf C}_a$  and ${\bf C}_b$ are isometric,
$a\cong_{Chen, \de} b$
if and only if ${\bf C}_a$  and ${\bf C}_b$ are Chen isometric,
$a\sim_\de b$
if and only if ${\bf C}_a$  and ${\bf C}_b$ are equivalent, and
$a\sim_{Chen, \de} b$
if and only if ${\bf C}_a$  and ${\bf C}_b$ are Chen equivalent.

\color{black}
}%%%%%%%%%%%%%%%%%%%%

 \subsection{A collection of known results}\label{subsec:results}

  We want to classify the ambient Petit rings $K[t;\sigma]/K[t;\sigma](t^\de-a)$
 up to isometries,  as $a$ runs over $K^\times$. In order to do so,
  we  need to describe  the Hamming-weight preserving ring isomorphisms between two  rings $K[t;\sigma]/K[t;\sigma](t^\de-a)$ and $K[t;\sigma]/K[t;\sigma](t^\de-b)$. Recall that
$R/R(t^\de-a)$ is associative if and only if $\os\mid \de$ and $a\in F$.

We first fix a choice of cyclic field extension $K/F$ of degree $\os$.
%Note that when  $\os \nmid \de$ or $\os=\de$, we may  fix a choice of generator $\sigma$ of $\Gal(K/F)$  as well (Theorem~\ref{thm:main1}).

  % $n \nmid m$ or $n\geq m-1$,
  All Hamming-weight preserving ring  isomorphisms between two proper nonassociative ambient rings  $K[t;\sigma]/K[t;\sigma](t^\de-a)$ and $K[t;\sigma]/K[t;\sigma](t^\de-b)$  are monomial of degree one, and of the form $G=G_{\tau,\alpha}$ (\emph{cf.} \cite[Corollary 6.2]{NevinsPumpluen2025submitted}, \cite[Theorem 4.2]{Pumpluen2025} or \cite[Theorem 2.3]{NevinsPumpluen2025submitted}).

 %We will now count the isomorphisms classes between proper nonassociative algebras of the form $K[t;\sigma]/K[t;\sigma](t^\de-a)$ where $a,b\in K\setminus F$ that are given by the isomorphisms $G=G_{\tau,\alpha}$. These are all possible classes when  $n \nmid m$ or  $n \geq m-1$.

 We do not assume that $t^\de-a\in K[t;\sigma]$ is reducible, if it is irreducible we will count the trivial skew constacyclic code obtained from $K[t;\sigma]/K[t;\sigma](t^\de-a)$.

Let us call the isomorphisms classes with respect to $G_{\tau,\alpha}$ the \emph{degree one
isomorphism classes} (or simply the  \emph{equivalence classes} as in \cite{Pumpluen2025}).

% In a large number of cases every Hamming-weight preserving isomorphism between proper nonassociative ambient rings of skew constacyclic codesis  monomial of degree one:when  $a,b\in K^\times$ and $n\nmid m$, then any nonzero homomorphism $G:K[t,\sigma]/K[t,\sigma](t^\de-a)\toK[t,\sigma]/K[t,\sigma](t^\de-b)$  whose restriction to $K$ is given by some automorphism $\tau$ commuting with $\sigma$, must be monomial of degree one, i.e. of the form  $G_{\tau,\alpha}$ \cite[Corollary 6.2]{NevinsPumpluen2025submitted}.

When $\os$ does not divide $\de$ then the ambient algebras are always not associative and all monomial isomorphisms between these algebras are monomial of degree one under the assumption that $\Aut(K)$ is commutative \cite[Corollary 6.2]{NevinsPumpluen2025submitted}.

In \cite{NevinsPumpluen2025submitted}  the number of Chen-equivalence and Chen-isometry classes of skew constacyclic codes of length $\de$ over $K=\mathbb{F}_{p^r}$ was computed.
(In \cite[Theorem 4]
{OuazzouNajmeddineAydin2025}, the  the number of Chen-isometry classes of codes of length $\de$ over $K=\mathbb{F}_{p^r}$ was computed incorrectly.)

%\textcolor{purple}{What was the definition of $[m]_s$ in the theorem below?  And is it $N$ that is the number of families?  Rather than $n$ as currently stated?}\textcolor{blue}{YES! Fixed it}

\begin{comment}
For an integer $s>0$, let
$$
[\de]_s=\frac{p^{s\de}-1}{p^s-1}=p^{s(\de-1)}+p^{s(\de-2)}+\cdots + p^s + 1.
$$

%Note that the proof does not require that $t^\de-a$ is reducible, so also counts the number of semifields which are not Chen-equivalent, or more precisely, gives an upper bound on their Chen equivalence classes, as not all of these algebras will be division algebras... This was our result:

\begin{theorem*}\label{c:Ouazzoufinite}
    Suppose $K=\mathbb{F}_{p^r}$ and $\sigma(x)=x^{p^s}$ with $s\mid r$ so that $\os=r/s$ and $\field_0=\mathbb{F}_{p^s}$.
    The number of distinct Chen isometry %$(\de,\sigma)$-isometry
    classes of families of skew constacyclic codes arising from nonassociative rings $\Kone{a}$ is $N$ where
    $$
     N= \begin{cases}
     \gcd([\de]_{s},p^r-1)&\text{if $\os\nmid \de$; and}\\
    \left(1-\frac{1}{[\os]_s}\right)\gcd([\de]_{s},p^r-1) & \text{if  $\os\mid \de$.}
    \end{cases}
    $$
    There are additionally $\gcd([\de]_{s},p^r-1)/[\os]_s$ different Chen equivalence % skew $(\de,\sigma)$-equivalence
    classes (and thus at most this many Chen-isometry classes) of families of skew $(\sigma,a)$-constacyclic codes arising from associative rings $\Kone{a}$, that is, for which $\os\mid \de$ and $a\in\field_0$.
\end{theorem*}
\end{comment}

By \cite[Corollary 6.2]{NevinsPumpluen2025submitted}, if   $a,b\in K^\times$ and $\os \nmid\de$, then any nonzero homomorphism
 $G:K[t;\sigma]/K[t;\sigma](t^\de-a) \to K[t;\sigma]/K[t;\sigma](t^\de-b)$  whose restriction to $K$ is given by some automorphism $\tau$ commuting with $\sigma$, must be monomial of degree one, hence in this case isometries are the same as equivalences, and indeed every isomorphism is an equivalence.

It is not always a simple matter to determine for which $a\in K^\times$ the skew polynomial $t^\de-a\in K[t;\sigma]$ is reducible, or in other words, for which $a\in K^\times$ nontrivial skew $(\sigma,a)$-constacyclic codes exist.  Applying Theorem~\ref{T:divalg}, we have the following result. % that precludes the existence nontrivial $(\sigma,a)$ skew constacyclic codes.  %When $\de$ is prime and $F$ contains a primitive $\de$th root of unity, \cite[Theorem 3.11]{BrownPhD2018} gives that $K[t;\sigma]/K[t;\sigma](t^\de-a) $ has zero divisors (that is, contains nontrivial codes)  if and only if there exists $\alpha\in K$ such that $N_\de^\sigma(\alpha)=a$.   Consequently we can  say more about the corresponding skew constacyclic codes.

\begin{proposition}\label{P:allcodesareconstacyclicifprime}
Let  $\de=2$ or $3$, or let $\de\geq 5$ be prime and suppose that $F$ contains a primitive $\de$th root of unity, that is $\mu_\de\subset F$. Then for any $a\in K^\times$, there exist nontrivial skew $(\sigma,a)$-constacyclic codes if and only if $\Kone{a}$ is equivalent to $\Kone{1}$, and in this case, any such code is isometric to a $\sigma$-constacyclic code.
%Then $K[t;\sigma]/K[t;\sigma](t^\de-a)$ has zero divisors if and only if it is Chen-equivalent  to $K[t;\sigma]/K[t;\sigma](t^\de-1)$.
\end{proposition}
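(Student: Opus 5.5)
We combine Theorem~\ref{T:divalg} with an explicit Chen equivalence. Fix $a\in K^\times$.

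First suppose nontrivial skew $(\sigma,a)$-constacyclic codes exist. Then $t^\de-a$ is reducible in $K[t;\sigma]$, so it has a monic right divisor $g$ with $0<\deg g<\de$. Hence $K[t;\sigma]/K[t;\sigma](t^\de-a)$ has zero divisors: writing $t^\de-a=hg$, the product $h\circ g$ is zero in the Petit ring. Therefore this ring is not a division algebra. The hypotheses on $\de$ and $F$ are exactly those of Theorem~\ref{T:divalg}, so that theorem gives some $z\in K$ with $N_\de^\sigma(z)=\sigma^{\de-1}(z)\cdots\sigma(z)z=a$. Clearly $z\neq 0$ since $a\neq0$.

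Next we build an equivalence $G_{\id,\alpha}:\Kone{1}\to\Kone{a}$ with $t\mapsto \alpha t$. The key computation is that $(\alpha t)^i=N_i^\sigma(\alpha)t^i$ in $K[t;\sigma]$. So $(\alpha t)^\de-1=N_\de^\sigma(\alpha)t^\de-1$, and we want this to lie in the left ideal generated by $t^\de-a$, that is, $N_\de^\sigma(\alpha)t^\de-1=N_\de^\sigma(\alpha)(t^\de-a)$. This holds exactly when $N_\de^\sigma(\alpha)\,a=1$, i.e.\ $N_\de^\sigma(\alpha)=a^{-1}$. Since $N_\de^\sigma$ is multiplicative, $\alpha=z^{-1}$ works.

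To conclude that $G_{\id,z^{-1}}$ is a ring isomorphism of Petit rings, we use the characterization of when $G_{\tau,\alpha}$ is an isomorphism from the earlier work (\cite{Pumpluen2025}, \cite{NevinsPumpluen2025submitted}): namely $G_{\tau,\alpha}:\Kone{a}\to\Kone{b}$ is an isomorphism iff $\tau(a)=N_\de^\sigma(\alpha)\,b$ (with $\tau$ commuting with $\sigma$). We apply it with $\tau=\id$, starting ring $\Kone{1}$ and target $\Kone{a}$. If we prefer to check this directly, the substitution $t\mapsto\alpha t$ defines a ring automorphism of $K[t;\sigma]$, because $(\alpha t)c=\sigma(c)\alpha t$. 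This automorphism carries the left ideal generated by $t^\de-1$ onto the left ideal generated by $t^\de-a$, up to a unit scalar. It also preserves degrees, so it commutes with taking right remainders and induces the isomorphism of Petit rings. Either way, $\Kone{a}$ is (even Chen) equivalent to $\Kone{1}$.

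For the converse, suppose $G_{\tau,\alpha}:\Kone{1}\to\Kone{a}$ is an equivalence. The ring $\Kone{1}$ has zero divisors, since $t-1$ right divides $t^\de-1$. A ring isomorphism carries zero divisors to zero divisors, so $\Kone{a}$ is not a division algebra. Every nonzero element of such a Petit ring generates a left principal ideal; as $\Kone{a}$ is not a division ring, some such ideal is proper and nonzero, and it corresponds to a nontrivial code. Equivalently, by Theorem~\ref{T:divalg}, $a$ is a norm $N_\de^\sigma(z)$, so $t-z$ right divides $t^\de-a$. This gives a nontrivial code.

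Finally, the equivalence $G$ restricts to left principal ideals and preserves Hamming weight, so every code in ${\bf C}_a$ is isometric to a code in ${\bf C}_1$. By the paper's conventions this is the statement that any such code is isometric to a skew cyclic ($\sigma$-constacyclic with $a=1$) code.

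The main obstacle is justifying that $G_{\id,\alpha}$ is a genuine isomorphism of the possibly nonassociative Petit rings. This is exactly where the cited criterion $\tau(a)=N_\de^\sigma(\alpha)b$, or the degree-preservation argument above, is needed.
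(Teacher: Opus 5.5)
Your proof is correct and follows the paper's route. Theorem~\ref{T:divalg} converts the existence of nontrivial codes into the condition $a=N_\de^\sigma(z)$ for some $z\in K^\times$. That condition is exactly what is needed for a Chen equivalence $G_{\id,\alpha}$ with $\Kone{1}$, and this equivalence then carries codes isometrically. Your direct check that $t\mapsto z^{-1}t$ induces an isomorphism of the Petit rings fills in a detail the paper leaves implicit. So does your zero-divisor argument for the converse, which also covers equivalences $G_{\tau,\alpha}$ with $\tau\neq\id$, a case the paper's proof passes over.
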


\begin{proof}
  A nontrivial skew $(\sigma,a)$-constacyclic code is a  proper left ideal  in the Petit ring
$\Kone{a}$, generated by a monic right divisor of the skew polynomial $f(t)=t^\de-a$.  By Theorem~\ref{T:divalg}, %\cite[Theorem 3.11]{BrownPhD2018},
under the given  hypotheses, $K[t;\sigma]/K[t;\sigma](t^\de-a)$ has such ideals (that is, is not a division algebra)
 if and only if there exists $\alpha\in K$ such that
$N_\de^\sigma(\alpha)=a$.
%This in turn is equivalent to saying that $t-\alpha$ is  a right divisor of $t^\de-a$ for some $\alpha\in K$ by \cite{J96}.
This in turn is equivalent to  the existence of an isomorphism $$G_{\id,\alpha}:K[t;\sigma]/K[t;\sigma](t^\de-a)\to K[t;\sigma]/K[t;\sigma](t^\de-1).$$  As every such isomorphism induces an Hamming-weight-preserving bijection between the corresponding ideals in $K[t;\sigma]/K[t;\sigma](t^\de-a)$ and $K[t;\sigma]/K[t;\sigma](t^\de-1)$, it gives an isometry between the corresponding codes.  In particular, every nontrivial skew $(\sigma,a)$-constacyclic code is therefore isometric to a skew $(\sigma,1)$-constacyclic, that is, a skew cyclic, code.
  %We know that $\tau(a)=N_\de^\sigma(\alpha)$ is equivalent to $a=N_\de^\sigma(\tau^{-1}(\alpha))$, since $\tau^{-1}(N_\de^\sigma(\alpha))=N_\de^\sigma(\tau^{-1}(\alpha))$.
 %In other words, $K[t;\sigma]/K[t;\sigma](t^\de-a)$ is  isomorphic to $K[t;\sigma]/K[t;\sigma](t^\de-1)$ via  $G_{id,\alpha}$ if and only if $N_\de^\sigma(\alpha)=a$
%for some $\alpha\in K^\times$.
\end{proof}

Therefore, in the setting of Proposition~\ref{P:allcodesareconstacyclicifprime}, there is no advantage to using a skew $(\sigma,a)$-constacyclic code, as these are all isometric to the simpler skew cyclic codes.

That said, when $\de$ is not prime, there can be many skew $(\sigma,a)$-constacyclic codes where $K[t;\sigma]/K[t;\sigma](t^\de-a)\not\cong K[t;\sigma]/K[t;\sigma](t^\de-1)$, and classifying these equivalence and isometry classes of Petit algebras, that often contain many nontrivial codes, is the focus of our study.

\section{Counting the equivalence classes of skew constacyclic codes over finite fields} \label{sec:countingequivclasses}

Let $p$ be a prime, $\gamma\in \NN$ and $q=p^\gamma$.  Choose $\os,\de\in \mathbb{N}$. Set  $F_0=\mathbb{F}_p$, $F=\mathbb{F}_q$ and $K=\mathbb{F}_{q^\os}$. Then $\Gal(K/F)$ is cyclic of order $\os$, generated by $\sigma$ where $\sigma(x)=x^q$ for all $x\in K$.  Let $\xi$ denote a primitive element of $K$, meaning a generator of the cyclic multiplicative group $K^\times$ of $K$, which has order $q^\os-1$. We write $K^\times =\langle \xi \rangle$.

For any $x\in K^\times$ we define
$$
N^\sigma_\de(x) = \prod_{i=0}^{\de-1}\sigma^i(x)
$$
and set
$$
s = \gcd\left(q^\os-1, \frac{q^\de-1}{q-1}\right).
$$
Since as polynomials we have $\gcd(x^\os-1,x^\de-1)=x^{\gcd(\os,\de)}-1$, it follows that in particular,  $s>1$ whenever $\gcd(\os,\de)>1$.  Moreover, since $s\mid q^\os-1$ it cannot be divisible by $p$, we have that $\gcd(s,p^k)=1$ for all $k\in \NN$.% for all $\mathbb{F}_{\tilde{q}}$ of characteristic $p$.

\subsection{Chen equivalence classes}
Let $a,b\in K^\times$.
Recall from Definition~\ref{Definition:isometryChenisometryequivalenceChenequivalence}(iv) that we say that two algebras of the form $K[t;\sigma]/K[t;\sigma](t^\de-a)$ are Chen-equivalent if there exists an isomorphism of the form $G_{\id,\alpha}$ with $\alpha \in K^\times$, between them. %\cite{Pumpluen2025}.

Note that these Chen equivalence classes were counted, using different ideas, in \cite{NevinsPumpluen2025submitted}.

\begin{theorem}\label{L:Chenclassesfinitefield}
    The distinct Chen-equivalence classes of  ambient Petit rings of the form  $K[t;\sigma]/K[t;\sigma](t^\de-a)$, where $[K:F]=\os$, are parametrized by taking $a\in \{1, \xi, \cdots, \xi^{s-1}\}$, where $s = \gcd\left(q^\os-1, \frac{q^\de-1}{q-1}\right).$
    \end{theorem}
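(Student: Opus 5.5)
The plan is to reduce Chen equivalence to a statement about the image of the norm-type map $N^\sigma_\de$ on $K^\times$, and then compute that image using the cyclic structure of $K^\times=\langle\xi\rangle$.

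\textbf{Step 1: characterize $G_{\id,\alpha}$.} Since $t^\de$ maps to $(\alpha t)^\de = N^\sigma_\de(\alpha)t^\de$, and $G_{\id,\alpha}$ must respect the relation $t^\de=a$ on the left against $t^\de=b$ on the right, a map $G_{\id,\alpha}:\Kone{a}\to\Kone{b}$ is a well-defined ring isomorphism exactly when
$$a=N^\sigma_\de(\alpha)\,b.$$
This is the standard criterion (cf.\ \cite{Pumpluen2025, NevinsPumpluen2025submitted}). Consequently, $\Kone{a}$ and $\Kone{b}$ are Chen equivalent if and only if $a$ and $b$ lie in the same coset of the subgroup $H=N^\sigma_\de(K^\times)$ of $K^\times$. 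Here $H$ is a subgroup because $N^\sigma_\de$ is a homomorphism, $K$ being commutative.

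\textbf{Step 2: compute $H$.} Since $\sigma(x)=x^q$, we have
$$N^\sigma_\de(x)=x^{1+q+\cdots+q^{\de-1}}=x^{(q^\de-1)/(q-1)}.$$
So $N^\sigma_\de$ is the map raising to the power $e=(q^\de-1)/(q-1)$ on the cyclic group $K^\times$ of order $q^\os-1$. Its image is the unique subgroup of index $\gcd(q^\os-1,e)=s$, namely $H=\langle \xi^s\rangle$.

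\textbf{Step 3: conclude.} The quotient $K^\times/H$ is cyclic of order $s$, generated by the class of $\xi$. Hence $1,\xi,\dots,\xi^{s-1}$ form a complete, irredundant set of coset representatives. By Step 1, these give exactly one representative $\Kone{\xi^i}$ for each Chen equivalence class.

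\textbf{Main obstacle.} The one point needing care is Step 1 in the nonassociative case. There I must verify that $t\mapsto\alpha t$ with $a=N^\sigma_\de(\alpha)b$ really respects the Petit multiplication, that is, right remainders modulo $t^\de-a$ versus modulo $t^\de-b$. Conversely, I must verify that an isomorphism forces this relation. I would check this on products $t^i\circ t^j$ with $i+j\ge\de$ and on $t\circ c$ for $c\in K$, using the relation \eqref{E:Normrelation}. Alternatively, I would cite the cited results for this criterion directly.
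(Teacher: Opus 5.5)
Your proposal is correct and is essentially the paper's proof: the paper also invokes \cite[Theorem 4.1]{Pumpluen2025} for the criterion that $G_{\id,\alpha}$ exists if and only if $a\in bN^\sigma_\de(K^\times)$, computes $N^\sigma_\de(\xi)=\xi^{(q^\de-1)/(q-1)}$ to get $N^\sigma_\de(K^\times)=\langle\xi^s\rangle$, and reads off the coset representatives $1,\xi,\dots,\xi^{s-1}$. If you verify Step 1 by hand instead of citing it, check multiplicativity on all pairs of monomials $ct^i\circ dt^j$ (this suffices by biadditivity and is a short computation with \eqref{E:Normrelation}); without associativity, the products $t^i\circ t^j$ and $t\circ c$ alone do not determine the others unless you also use that $K$ lies in the left and middle nuclei.
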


When  $[K:F]=\os$, the distinct Chen-equivalence classes ${\bf C}_a$ of skew constacyclic codes of length $n$ are thus parametrized by taking $a\in \{1, \xi, \cdots, \xi^{s-1}\}$, where $s = \gcd\left(q^\os-1, \frac{q^\de-1}{q-1}\right).$

\begin{proof}
    By \cite[Theorem 4.1]{Pumpluen2025}, an isomorphism of the form $G_{\id,\alpha}$ exists if and only if $a\in bN^\sigma_\de(K^\times)$. Therefore the Chen equivalence classes are parametrized by the left cosets of the subgroup $N_{\de}^\sigma(K^\times)$.  We compute
    $$
    N^\sigma_\de(\xi) = \xi \sigma(\xi) \cdots \sigma^{\de-1}(\xi) = \xi \xi^q \cdots \xi^{q^{\de-1}} = \xi^{\sum_{i=0}^{\de-1} q^i}=\xi^{(q^\de-1)/(q-1)}.
    $$
    Since $\xi$ has order $q^\os-1$, it follows that $N^\sigma_\de(K^\times)$ is the cyclic group generated by $\xi^s$, where $s=\gcd\left(q^\os-1, \frac{q^\de-1}{q-1}\right).$  Its cosets are thus parametrized by the set $\{1,\xi,\cdots, \xi^{s-1}\}$.
\end{proof}

In particular, if $\gcd(\os,\de)>1$ then $s>1$ and there is more than one Chen-equivalence class. For example, if $\os\mid \de$ then some of these Petit rings will be associative (those with $a\in F^\times$) and others not (those with $a\in K^\times\smallsetminus F^\times$), and these are not isomorphic so are never equivalent.

\begin{example}\label{Example:p=3n=2m=4}
Let  $p=q=3$, $\os=2$ and $\de=4$. Then $s=\gcd(8, 40)=8$ and  $N_\de^\sigma(\FF_9^\times)=\{1\}$.  Therefore  by Theorem~\ref{L:Chenclassesfinitefield},  no two algebras of the form $\FF_9[t;\sigma]/\FF_9[t;\sigma](t^4-a)$ are Chen-equivalent.
Those with $a\notin \FF_3$ are properly nonassociative.  In particular, note that in this case there are thus two associative Petit algebras, represented by $a=1$ and $a=2$, and these are non-Chen-equivalent.
\end{example}

At the other extreme we infer the following consequence.

\begin{corollary}\label{C:allskewequiv}
    Suppose $s = \gcd\left(q^\os-1, \frac{q^\de-1}{q-1}\right)=1$.  Then for every $a\in K^\times$, there exist nontrivial skew $(\sigma,a)$-constacyclic codes of length $\de$, and each one  is isometric to a skew cyclic code (that is, a skew $(\sigma,1)$-constacyclic code).
\end{corollary}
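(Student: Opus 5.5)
The plan is to read the corollary straight off Theorem~\ref{L:Chenclassesfinitefield} and its proof. Since $s=1$, the subgroup $N^\sigma_\de(K^\times)$ is generated by $\xi^s=\xi$, so $N^\sigma_\de(K^\times)=K^\times$. Hence for every $a\in K^\times$ there is some $\alpha\in K^\times$ with $N^\sigma_\de(\alpha)=a$. Equivalently, there is a single Chen-equivalence class, represented by $a=1$.

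The first step is existence of nontrivial codes. From $N^\sigma_\de(\alpha)=a$ we want $t-\alpha$ to right divide $t^\de-a$ in $K[t;\sigma]$. The paper only records the implication that a right divisor $t-\alpha$ forces $N^\sigma_\de(\alpha)=a$. I would verify the converse directly: the remainder of right division of $t^\de$ by $t-\alpha$ is $N^\sigma_\de(\alpha)$, as is standard (cf.\ \cite{Jacobson1996}). One way to see this is to note that $t^{i}\equiv N^\sigma_i(\alpha)$ modulo $K[t;\sigma](t-\alpha)$, which follows by induction using $t\,N^\sigma_i(\alpha)=\sigma(N^\sigma_i(\alpha))\,t$ together with relation \eqref{E:Normrelation}. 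Therefore $t^\de-a$ is reducible with the monic right divisor $g(t)=t-\alpha$, and $g$ generates a proper nonzero left ideal of $\Kone{a}$, i.e.\ a nontrivial skew $(\sigma,a)$-constacyclic code. (Since $\de>1$ is assumed for Petit rings, this ideal is indeed proper and nonzero.)

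The second step is the isometry to a skew cyclic code. By \cite[Theorem 4.1]{Pumpluen2025}, as used in the proof of Theorem~\ref{L:Chenclassesfinitefield}, the condition $a\in 1\cdot N^\sigma_\de(K^\times)$ gives a Chen equivalence $G_{\id,\alpha'}:\Kone{a}\to\Kone{1}$ for suitable $\alpha'$. This is a Hamming-weight-preserving ring isomorphism, so it maps principal left ideals generated by monic right divisors of $t^\de-a$ bijectively onto those of $t^\de-1$, preserving dimension and weight. This is exactly the argument in the proof of Proposition~\ref{P:allcodesareconstacyclicifprime}. Consequently each skew $(\sigma,a)$-constacyclic code is isometric to a skew $(\sigma,1)$-constacyclic code.

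The only mildly delicate point is the converse divisibility in the first step. Everything else is immediate from the cited results.
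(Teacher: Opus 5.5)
Your proposal is correct and follows the paper's proof: $s=1$ gives $N^\sigma_\de(K^\times)=K^\times$, so each $a=N^\sigma_\de(\alpha)$. That single fact yields both the right divisor $t-\alpha$ of $t^\de-a$ and the Chen equivalence $G_{\id,\alpha}:\Kone{a}\to\Kone{1}$ (the same $\alpha$ works), which carries each code isometrically onto a skew cyclic code. Your inductive check that $t^i\equiv N^\sigma_i(\alpha)$ modulo $K[t;\sigma](t-\alpha)$ is a useful addition, since the paper uses the converse divisibility $N^\sigma_\de(\alpha)=a\Rightarrow (t-\alpha)\mid_r (t^\de-a)$ without proving it.
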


\begin{proof}
If $s=1$, then by Theorem~\ref{L:Chenclassesfinitefield},  all of the  ambient Petit rings $K[t;\sigma]/K[t;\sigma](t^\de-a)$, as $a$ ranges over $K^\times$, are Chen-equivalent to $$K[t;\sigma]/K[t;\sigma](t^\de-1).$$
In fact, the proof shows that in this case every $a\in K^\times$ is of the form $a=N^{\sigma}_\de(\alpha)$ for some $\alpha \in K^\times$, which implies that $(t-\alpha)$ right divides $t^\de-a$.  Consequently,  $t^\de-a$ is a reducible polynomial and well-suited for coding applications.  The isomorphism $G_{\id,\alpha}:K[t;\sigma]/K[t;\sigma](t^\de-a)\to K[t;\sigma]/K[t;\sigma](t^\de-1)$ maps each left principal ideal of the first ring (which corresponds to a choice of $(\sigma,a)$ skew constacyclic code) isometrically onto a left principal ideal of the second (which corresponds to a skew constacyclic code), whence the result.
\end{proof}

Note that in the setting of Corollary~
\ref{C:allskewequiv}, we must have $\gcd(\de,\os)=1$ and therefore it follows that  all the ambient Petit rings are properly nonassociative.

 %Thus  if $s=1$ then  $K[t;\sigma]/K[t;\sigma](t^\de-a)$ is not a division algebra for every $a\in K^\times$. %\textcolor{blue}{Moreover, if $\de$ is also prime and additionally there is an $\de$-th root of unity in $F$ then $t^\de-a$ is even decomposable into a product of $\de$ linear factors \cite[Theorem 3.10]{BrownPhD2018} or as Exercise in Bourbaki, Elements de Mathematique: Fasc. XXIII, Algebre, Chapitre 8, Modules et anneaux semi-simples p.~344.} \textcolor{brown}{How did we know that $t-\alpha$ right divides $t^\de-a$? I naively thought we'd get $t^\de-a = \prod_i (t-\sigma^i(a))$ but now realize I don't know how to show that... I should look at your Bourbaki... }

%In general we have:  an isomorphism of the form $G_{id,\alpha}$ exists if and only if  $ab^{-1}= N^\sigma_\de(\alpha)$ if and only if $t-\alpha$ right divides $t^\de-ab^{-1}$.

%This reassures us that under some circumstances, the only skew constacyclic codes of interest are equivalent to $a=1$.

Note that this is in some sense a strengthening of Proposition~\ref{P:allcodesareconstacyclicifprime}: when $s=1$, it is not necessarily to consider the primality of $\de$ in order to decide if skew $(\sigma,a)$-constacyclic codes exist: they do, for each $a\in K^\times$.

\begin{corollary}
    Suppose $s=\gcd\left(q^\os-1, \frac{q^\de-1}{q-1}\right)=1$. Then $\Kone{a}$ is not a division algebra, that is, $t^\de-a$ is reducible.
\end{corollary}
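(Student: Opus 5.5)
The plan is to read the corollary off from the proof of Theorem~\ref{L:Chenclassesfinitefield} and the proof of Corollary~\ref{C:allskewequiv}, together with the standard fact, already quoted in Section~\ref{subsec:nonass}, that $t-\alpha$ right divides $t^\de - N^\sigma_\de(\alpha)$.

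First, the proof of Theorem~\ref{L:Chenclassesfinitefield} shows that $N^\sigma_\de(K^\times)$ is the cyclic subgroup of $K^\times$ generated by $\xi^s$. If $s=1$, this subgroup is all of $K^\times$, so the norm map $N^\sigma_\de$ is surjective. Hence for every $a\in K^\times$ there is some $\alpha\in K^\times$ with $N^\sigma_\de(\alpha)=a$.

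Next, we verify the right division directly rather than only citing \cite{Jacobson1996}. Using $t\alpha=\sigma(\alpha)t$, one checks by induction on $i$ that
$$t^i = g_i(t)\,(t-\alpha) + N^\sigma_i(\alpha)$$
for some $g_i\in K[t;\sigma]$. The inductive step is
$$t^{i+1}=t\,t^i = t g_i (t-\alpha) + \sigma(N^\sigma_i(\alpha))\,t,$$
followed by writing $t = (t-\alpha)+\alpha$ and applying \eqref{E:Normrelation}, which gives $\sigma(N^\sigma_i(\alpha))\alpha = N^\sigma_{i+1}(\alpha)$. Taking $i=\de$, the remainder of $t^\de-a$ on right division by $t-\alpha$ is $N^\sigma_\de(\alpha)-a=0$. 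So $t^\de-a=g(t)(t-\alpha)$ with $\deg g=\de-1\geq 1$, since $\de>1$. Thus $t^\de-a$ is reducible.

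Finally, we translate reducibility into the failure of the division property. In $\Kone{a}$ the elements $g$ and $t-\alpha$ are nonzero and have degrees summing to $\de$. Their product is computed as $g(t-\alpha)$ modulo $_r(t^\de-a)$, which equals $0$. So $\Kone{a}$ has zero divisors and is not a division algebra.

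No step is a real obstacle. The only point needing care is the order of multiplication in the remainder computation, that is, confirming that the twisted norm arises from right rather than left division.
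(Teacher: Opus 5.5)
Your proof is correct, but it takes a more direct route than the paper. The paper's proof is one line: since $s=1$, Theorem~\ref{L:Chenclassesfinitefield} gives a Chen equivalence $\Kone{a}\cong\Kone{1}$, and $\Kone{1}$ has nontrivial principal left ideals.

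You use only the surjectivity of $N^\sigma_\de$, which is the same fact underlying that theorem. You then exhibit the factorization $t^\de-a=g(t)(t-\alpha)$ explicitly through the remainder computation $t^i\equiv N^\sigma_i(\alpha)$ on right division by $t-\alpha$. This gives zero divisors directly inside $\Kone{a}$. It is essentially the argument sketched in the proof of Corollary~\ref{C:allskewequiv}, made self-contained.

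Your version does not depend on the isomorphism criterion behind Theorem~\ref{L:Chenclassesfinitefield}. It also does not need the fact that $\Kone{1}$ has nontrivial left ideals, which the paper asserts without proof and which is just your computation with $\alpha=1$. And it names an explicit right factor of $t^\de-a$. The paper's version is shorter because it reuses the classification already established.

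A small correction: Section~\ref{subsec:nonass} quotes only the converse implication, namely that right divisibility by $t-\alpha$ forces $N^\sigma_\de(\alpha)=a$. It does not quote the direction you attribute to it. Nothing is lost, because your remainder computation proves the direction you need (in fact both).

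You were right to flag $\de>1$. For $\de=1$ one still has $s=1$, yet $\Kone{a}$ is just the field $K$, so the hypothesis built into the definition of Petit rings is genuinely needed.
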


\begin{proof}
    Under the given hypothesis, $\Kone{a}\cong \Kone{1}$, which has nontrivial principal left ideals.
\end{proof}

\subsection{Equivalence classes}

We next turn to the question of counting classes under the stronger notion of \emph{equivalence} in the sense of Definition~\ref{Definition:isometryChenisometryequivalenceChenequivalence} (iii), that is, with respect to isomorphisms of the form $G_{\tau,\alpha}$ where $\tau\in \Aut(K)=\Gal(K/F_0)$ and $\alpha\in K^\times$.  When the fixed field of this automorphism $\tau$ contains $\tilde{F}$, then we say that the rings are \emph{equivalent over $\tilde{F}$}.  We retain this slight generality in this section for interest, but the key case the reader should bear in mind is $\tilde{q}=p$, that is,  when $\tilde{F}=\FF_p$.  Algebras that are equivalent over $\FF_p$ are in fact equivalent as rings.

For each $d\in \NN$, write $\varphi(d)=|(\Z/d\Z)^\times|$ for the Euler phi function of $d$. For any $\ell$ such that $\gcd(\ell,d)=1$, write $o(\ell)_d$ for the order of the image of $\ell$ in $(\Z/d\Z)^\times$.

\begin{proposition}\label{T:F-tildeequivalencecount}
    Let $\tilde{F}$ be a subfield of $F$ and denote its size by $\tilde{q}$.  The number of equivalence classes over $\tilde{F}$ of algebras of the form $K[t;\sigma]/K[t;\sigma](t^\de-a)$, as $a$ ranges over $K^\times$ is
    $$
    \sum_{d|s}\frac{\varphi(d)}{o(\tilde{q})_d}.
    $$
\end{proposition}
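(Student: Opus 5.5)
We reduce the count to counting orbits of a cyclic group acting on a cyclic group of order $s$. By \cite[Theorem 4.1]{Pumpluen2025} (the $\tau$-twisted version of the criterion used in the proof of Theorem~\ref{L:Chenclassesfinitefield}), an isomorphism $G_{\tau,\alpha}:\Kone{a}\to\Kone{b}$ exists exactly when $\tau(a)\in b\,N^\sigma_\de(K^\times)$. Here $\tau\in\Aut(K)$ commutes with $\sigma$, since $\Aut(K)$ is abelian, so $\tau$ preserves $N^\sigma_\de(K^\times)$. Hence the equivalence classes over $\tilde F$ are the orbits of the group $\Gal(K/\tilde F)$ acting on the quotient $Q=K^\times/N^\sigma_\de(K^\times)$. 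By the proof of Theorem~\ref{L:Chenclassesfinitefield}, $N^\sigma_\de(K^\times)=\langle\xi^s\rangle$, so $Q\cong \Z/s\Z$ via $\xi^j\mapsto j$.

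Next we describe the action. $\Gal(K/\tilde F)$ is cyclic, generated by the Frobenius $x\mapsto x^{\tilde q}$, which sends $\xi^j$ to $\xi^{\tilde q j}$. On $\Z/s\Z$ it therefore acts by multiplication by $\tilde q$, a unit modulo $s$ because $\gcd(s,p)=1$. So we must count the orbits of the cyclic group $\langle \tilde q\rangle\subseteq(\Z/s\Z)^\times$ acting on $\Z/s\Z$ by multiplication.

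To count orbits, partition $\Z/s\Z$ by order. For each divisor $d\mid s$, the elements of additive order exactly $d$ are the multiples $(s/d)u$ with $u\in(\Z/d\Z)^\times$. This set has $\varphi(d)$ elements and is stable under multiplication by $\tilde q$. Identifying it with $(\Z/d\Z)^\times$, the action becomes multiplication by the image of $\tilde q$ in $(\Z/d\Z)^\times$. This action is free, because $\tilde q^iu=u$ forces $\tilde q^i\equiv1 \pmod d$. So every orbit has size $o(\tilde q)_d$, giving $\varphi(d)/o(\tilde q)_d$ orbits on that layer. Summing over $d\mid s$ yields the stated formula.

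The main obstacle is the first step. We must make sure the criterion from \cite[Theorem 4.1]{Pumpluen2025} is exactly $\tau(a)\in bN^\sigma_\de(K^\times)$, and check the direction of $\tau$ (whether $\tau(a)$ or $\tau^{-1}(a)$ appears). The direction does not affect the orbit count. We must also confirm that every $\tau\in\Gal(K/\tilde F)$ is allowed, with no further compatibility condition with $\sigma$; this holds because the Galois group of a finite field is abelian. The rest is elementary bookkeeping.
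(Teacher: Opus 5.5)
Your proof is correct and follows essentially the same route as the paper. Both identify the equivalence classes over $\tilde F$ with the $\Gal(K/\tilde F)$-orbits on $K^\times/N^\sigma_\de(K^\times)\cong \Z/s\Z$, on which the Frobenius acts by multiplication by $\tilde q$, and both split by additive order $d\mid s$ into layers of $\varphi(d)$ elements whose orbits all have size $o(\tilde q)_d$. The one difference is that you state explicitly the criterion $\tau(a)\in b\,N^\sigma_\de(K^\times)$ for $G_{\tau,\alpha}$ to exist. The paper uses this criterion only tacitly, when it asserts that each distinct orbit $\Xi_i$ is a single equivalence class.
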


\begin{proof}
    Let $N=[K:\tilde{F}]=\log_p(q^\os)/\log_p(\tilde{q})$ denote the degree of this extension.  The automorphism of $K$ defined by $\tau(x)=x^{\tilde{q}}$ for each $x\in K$ generates the cyclic  Galois group ${\rm Gal}(K/\tilde{F})$ and commutes with $\sigma$.  In particular, $\tau$ stabilizes $K^\times$ and the subgroup $N^\sigma_\de(K^\times)$, so acts as a permutation on  the cosets of $N^\sigma_\de(K^\times)=\langle \xi^s\rangle$, where $s=\gcd\left(q^\os-1, \frac{q^\de-1}{q-1}\right)$.
    By Lemma~\ref{L:Chenclassesfinitefield}, each coset is represented by $\xi^i$ for some $0\leq i <s$.  For each such $i$, the orbit of $\xi^i$ under the Galois group is
    $$
    \Gal(K/\tilde{F})\cdot \xi^i = \{ \xi^{i\tilde{q}^r} \mid 0\leq r < N\}.
    $$
    Thus the Galois orbit of the coset $\xi^i N^\sigma_\de(K^\times) = \xi^i\langle \xi^s\rangle$ in $K^\times$ is the union
    $$
    \Xi_i:=\bigcup_{0\leq r<N}\xi^{i\tilde{q}^r}N^\sigma_\de(K^\times).
    $$
    These orbits need not be distinct: for example if $j\equiv i\tilde{q}\mod s$ then $\xi^j$ and $\xi^{i\tilde{q}}$ represent the same coset, so $\Xi_i=\Xi_{j}$.  The orbits are also not all of the same size: for example, the trivial coset forms its own orbit, whereas in general the size depends on the order of $\tilde{q}$ mod $s$.  Nevertheless, each of the \emph{distinct} orbits  $\Xi_i$ defines a single equivalence class of algebras $K[t;\sigma]/K[t;\sigma](t^\de-a)$ over $\tilde{F}$, namely, by letting $a$ run over $\Xi_i$.

    To count the number of distinct orbits, we proceed as follows.  Let $d$ be a divisor of $s$.  Set ${\mathbf R}_d = \{i\in \Z/s\Z\mid \gcd(i,s)=s/d\}$ to be the set of all elements of additive order exactly $d$; since $\gcd(\tilde{q},s)=1$, we have that for each $i\in {\mathbf R}_d$ and $0\leq r<N$ that $\gcd(i\tilde{q}^r,s)=s/d$ so
    $$
    \Xi(d) = \bigcup_{j\in \mathbf{R}_d} \xi^j N^\sigma_\de(K^\times)
    $$
    is the $\Gal(K/\tilde{F})$-invariant union of the orbits $\Xi_i$ for which $\gcd(i,s)=s/d$.  Moreover, since the sets $\mathbf{R}_d$ partition $
    \Z/s\Z$, we have
    $$
    K^\times = \bigsqcup_{d|s} \Xi(d)
    $$
    so it suffices to count the number of distinct $\Gal(K/\tilde{F})$ orbits in each such $\Xi(d)$.

    Fix $d$ dividing $s$.  Since  $\gcd(d,\tilde{q})=1$, we may set $s_0:=o(\tilde{q})_d$ to be the order of $\tilde{q}$ in $(\Z/d\Z)^\times$. Thus for each $i\in {\mathbf R}_d$,  $s_0$ is the least power of $\tilde{q}$ for which we have $i(\tilde{q}^{s_0}-1)\equiv 0\mod s$, whence it is the least power such that $i\tilde{q}^{s_0}\equiv i\mod s$. Therefore for each $i\in {\mathbf R}_d$, the number of distinct cosets in the orbit $\Xi_i$ is $s_0$.  Since the map $k\mapsto k(s/d)$ is an isomorphism of $(\Z/d\Z)^\times$ onto ${\mathbf R}_d$, $\Xi(d)$ contains precisely $\varphi(d)$ distinct cosets, which are partitioned into $\varphi(d)/s_0$ orbits of size $s_0=o(\tilde{q})_d$, as required.
\end{proof}

\begin{example}
    In the setting of Example~\ref{Example:p=3n=2m=4}, we have $s=8$, that is, there are exactly $8$ Chen equivalence classes. Setting $\tilde{F}=F_0=\FF_3$ in Proposition~\ref{T:F-tildeequivalencecount} we deduce that there are only
    $$
    \sum_{d|8}\frac{\varphi(d)}{o(3)_d} = \underbrace{\frac{1}{1}}_{d=1}+\underbrace{\frac{1}{1}}_{d=2}+\underbrace{\frac{2}{2}}_{d=4}+
    \underbrace{\frac{4}{2}}_{d=8}=5
    $$
    equivalence classes of these algebras.  To describe these equivalence classes, set $${\mathbf R}_d = \{i\in \Z/s\Z\mid \gcd(i,s)=s/d\}$$ as in the proof.  We compute that
    $$
    {\mathbf R}_1=\{0\}, {\mathbf R}_2=\{4\}, {\mathbf R}_4=\{2,6\},\;\text{and}\; {\mathbf R}_8=\{1,3,5,7\}.
    $$
    Now the equivalence classes are given by the orbits of the Galois group on the sets
    $$
    \Xi(d) = \bigcup_{j\in \mathbf{R}_d} \xi^j N^\sigma_\de(K^\times),
    $$
    which we compute as follows.

    Since $N_\de^\sigma(\FF_9^\times)=\{1\}$, each coset $\xi^jN_\de^\sigma(\FF_9^\times)$ is simply the set $\{\xi^j\}$.  Thus $\Xi(1) = \{\xi^0=1\}$ is one equivalence class, and $\Xi(2)=\{\xi^4\}$ is another.  We have $\Xi(4)=\{\xi^2,\xi^6=\sigma(\xi^2)\}$ which forms a single orbit under the Galois group, hence another equivalence class.  Finally, $\Xi(8)=\{\xi,\xi^3=\sigma(\xi),\xi^5,\xi^7=\sigma(\xi^5)\}$, which we see forms two orbits under the Galois group, hence two equivalence classes, as predicted by the theorem.

    Note that the two associative algebras are represented by $a=1$ and $a=2=\xi^4$, and these are inequivalent.
\end{example}

In general, computing the order of $\tilde{q}$ modulo various divisors $d$ of $s$ is an arithmetically interesting problem.  It is simple in some cases; for example, for all $d|\gcd(s,\tilde{q}-1)$ we have $\tilde{q}\equiv 1\mod d$ and so $o(\tilde{q})_d=1$.

While Proposition~\ref{T:F-tildeequivalencecount} is stated for a general base field $\tilde{F}$, the core case is when we choose $\tilde{F}=F_0=\FF_p$ and $\tilde{q}=p$;  this is exactly the case we need to consider in order to count all the equivalence classes of skew constacyclic codes.

\begin{theorem}\label{them:countingskewconstacodes}
The number of equivalence classes of ambient Petit rings  $\Kone{a}$, with $a\in K^\times$ and
with $\sigma$ of order $\os$, that is, the number of  equivalence classes  ${\bf C}_a$, is
    $$
    \sum_{d|s}\frac{\varphi(d)}{o(p)_d}.
    $$
In particular, when  $\os \nmid\de$ we conclude that this is also the number of \emph{isometry} classes  of these rings, i.e. the number of  \emph{isometry} classes  ${\bf C}_a$.
\end{theorem}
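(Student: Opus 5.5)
The statement is essentially the specialization $\tilde{F}=F_0=\FF_p$ of Proposition~\ref{T:F-tildeequivalencecount}, together with Theorem~\ref{T:isometryandequivalencecoincide}. The plan is to first justify that the ring-level notion of equivalence coincides with equivalence over $\FF_p$. By Definition~\ref{Definition:isometryChenisometryequivalenceChenequivalence}(iii), an equivalence is a ring isomorphism $G_{\tau,\alpha}$ with $\tau\in\Aut(K)$. Since $K=\FF_{q^\os}$ is a finite field of characteristic $p$, every automorphism fixes the prime field. Hence $\Aut(K)=\Gal(K/\FF_p)$, and this group is cyclic, generated by the Frobenius $x\mapsto x^p$. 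In particular every such $\tau$ commutes with $\sigma$, so the hypotheses used in the proof of Proposition~\ref{T:F-tildeequivalencecount} hold. Therefore the equivalence classes of the rings $\Kone{a}$, as $a$ ranges over $K^\times$, are exactly the equivalence classes over $\tilde F=\FF_p$ in the sense of that proposition.

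Next I would recall why these classes are the Galois orbits on cosets, which is the content the proposition already supplies. By \cite[Theorem 4.1]{Pumpluen2025} (and its $\tau$-twisted form), $G_{\tau,\alpha}$ exists between $\Kone{a}$ and $\Kone{b}$ if and only if $\tau(a)\in bN^\sigma_\de(K^\times)$. Hence $a$ and $b$ are equivalent exactly when their cosets modulo $N^\sigma_\de(K^\times)=\langle\xi^s\rangle$ lie in the same $\Gal(K/\FF_p)$-orbit. Applying Proposition~\ref{T:F-tildeequivalencecount} with $\tilde q=p$ then gives the count $\sum_{d\mid s}\varphi(d)/o(p)_d$ directly. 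Here $o(p)_d$ is defined because $\gcd(s,p)=1$, as noted at the start of Section~\ref{sec:countingequivclasses}. Since each ${\bf C}_a$ is determined by its ambient ring, this is also the number of equivalence classes ${\bf C}_a$.

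For the final assertion, assume $\os\nmid\de$. Then Theorem~\ref{T:isometryandequivalencecoincide} says every isometry $G_{\tau,\alpha,k}$ between $\Kone{a}$ and $\Kone{b}$ is of the form $G_{\tau,\alpha}$, i.e.\ an equivalence. Conversely every equivalence is trivially an isometry. So the two relations agree on the set $\{\Kone{a}\mid a\in K^\times\}$, the partitions coincide, and the same sum counts the isometry classes of the rings and hence of the ${\bf C}_a$.

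The only delicate step is the first identification: checking that allowing arbitrary $\tau\in\Aut(K)$ loses nothing compared with $\Gal(K/\FF_p)$, and that the criterion for $G_{\tau,\alpha}$ is the twisted coset condition. Both follow from finiteness of $K$ and the cited results, so I expect no real obstacle.
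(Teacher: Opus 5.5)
Your proposal is correct and matches the paper's proof: specialize Proposition~\ref{T:F-tildeequivalencecount} to $\tilde F=\FF_p$, which is legitimate because $\Aut(K)=\Gal(K/\FF_p)$, and then invoke Theorem~\ref{T:isometryandequivalencecoincide} for the case $\os\nmid\de$. The twisted coset criterion $\tau(a)\in bN^\sigma_\de(K^\times)$ that you spell out is exactly what the proposition's orbit argument uses implicitly.
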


\begin{proof}
This is just the specialization to $\tilde{F}=\FF_p$ in Proposition~\ref{T:F-tildeequivalencecount}. For the final statement, note that when   $\os \nmid\de$, isometry and equivalence coincide, by Theorem~\ref{T:isometryandequivalencecoincide}.
\end{proof}

The value in Theorem~\ref{them:countingskewconstacodes} can be arithmetically difficult to compute when $s$ is very large, as it is a sum of positive integer terms over all the divisors of $s$.  We have, however, the following immediate corollary.

\begin{corollary}\label{cor:lowerbound}
    The number of equivalence classes of ambient Petit rings of the form $\Kone{a}$, with $a\in K^\times$ and
with $\sigma$ of order $\os$, is bounded below by $\mathsf{d}(s)$, where $\mathsf{d}$ is the number-of-divisors function \cite[https://oeis.org/A000005]{OEIS}.
\end{corollary}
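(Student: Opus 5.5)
The plan is to read the bound directly off the formula in Theorem~\ref{them:countingskewconstacodes}, which gives the number of equivalence classes as
$$
\sum_{d\mid s}\frac{\varphi(d)}{o(p)_d}.
$$
It suffices to show that each summand is a positive integer, i.e.\ at least $1$. Summing over the $\mathsf{d}(s)$ divisors of $s$ then gives the lower bound $\mathsf{d}(s)$.

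To see that each summand is a positive integer, fix a divisor $d$ of $s$. Since $\gcd(s,p)=1$, as noted at the start of Section~\ref{sec:countingequivclasses}, we also have $\gcd(d,p)=1$. Hence $p$ defines an element of the finite group $(\Z/d\Z)^\times$, of order $o(p)_d$. By Lagrange's theorem, $o(p)_d$ divides $|(\Z/d\Z)^\times|=\varphi(d)$. Because $\varphi(d)\geq 1$, the quotient $\varphi(d)/o(p)_d$ is a positive integer.

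The same conclusion can be read off the proof of Proposition~\ref{T:F-tildeequivalencecount}, without appealing to Lagrange. There, for each $d\mid s$, the set $\Xi(d)$ is a nonempty $\Gal(K/\FF_p)$-stable union of $\varphi(d)$ cosets: it is nonempty because $\mathbf{R}_d$ contains $s/d$. The number of Galois orbits in $\Xi(d)$ is exactly $\varphi(d)/o(p)_d$, so there is at least one orbit, hence at least one equivalence class, for each divisor $d$. Distinct divisors give disjoint sets $\Xi(d)$, since these partition $K^\times$, so these classes are pairwise distinct.

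No step here is a genuine obstacle. The only point needing care is the case $d=1$, where $(\Z/1\Z)^\times$ is the trivial group, so $\varphi(1)=1$ and $o(p)_1=1$, and the summand is still $1$.
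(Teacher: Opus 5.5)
Your proposal is correct and is essentially the paper's proof: each summand $\varphi(d)/o(p)_d$ is a positive integer, so summing over the $\mathsf{d}(s)$ divisors of $s$ gives the bound. Your Lagrange argument, using $\gcd(d,p)=1$ so that $o(p)_d \mid \varphi(d)$, and your orbit-counting alternative both just make explicit the integrality that the paper asserts without comment.
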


\begin{proof}
    Each term of the sum in Theorem~\ref{them:countingskewconstacodes} is a positive integer, so greater than or equal to $1$.  Hence the sum is at least equal to the number of divisors of $s$.
\end{proof}

To state another useful corollary, we need the following arithmetic lemma.

\begin{lemma}\label{L:s_0}
Suppose $\os\mid\de$.    Let $s=\gcd\left( q^\os-1, \frac{q^{\de}-1}{q-1}\right)$ and $s_0 = \gcd(q-1,\frac{\de}{\os})$. Then
    $$
    s = \left(\frac{q^\os-1}{q-1}\right) s_0.
    $$
\end{lemma}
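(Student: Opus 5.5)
Write $\de=\os\ell$ with $\ell=\de/\os$. The plan is to split $\frac{q^\de-1}{q-1}$ into the factor $\frac{q^\os-1}{q-1}$ times a cofactor, and then to compute the gcd of that cofactor with $q-1$.

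First, since $q^\de-1=(q^\os)^\ell-1$, we have
$$
\frac{q^\de-1}{q-1}=\frac{q^\os-1}{q-1}\cdot M,\qquad M:=\sum_{j=0}^{\ell-1}q^{\os j}.
$$
Set $A=\frac{q^\os-1}{q-1}$, so that $q^\os-1=A(q-1)$. Factoring out $A$ gives
$$
s=\gcd\bigl(A(q-1),AM\bigr)=A\cdot\gcd(q-1,M).
$$

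Second, reduce $M$ modulo $q-1$. Because $q\equiv1 \pmod{q-1}$, each term $q^{\os j}$ is $\equiv 1$, so $M\equiv \ell \pmod{q-1}$. Hence $\gcd(q-1,M)=\gcd(q-1,\ell)=s_0$, and therefore $s=A s_0$, as claimed.

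There is no real obstacle. The only point needing care is that pulling out the common factor $A$ is legitimate, which it is because $\gcd(Ax,Ay)=A\gcd(x,y)$ for positive integers. The edge case $q=2$ is also fine: then $q-1=1$ and $s_0=1$, and the identity gives $s=A=2^\os-1$, which agrees with the formula.
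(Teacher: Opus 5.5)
Your proposal is correct and follows essentially the same route as the paper: both factor $\frac{q^\os-1}{q-1}$ out of the gcd, then reduce the cofactor $\frac{q^\de-1}{q^\os-1}$ modulo $q-1$ to $\de/\os$. You also spell out the geometric-sum form of that cofactor and the identity $\gcd(Ax,Ay)=A\gcd(x,y)$, which the paper leaves implicit.
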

\begin{proof}
Since $\os\mid\de$, we have  $q^{\os}-1$ divides $q^{\de}-1$.  Note that
$$
s = \gcd\left( q^\os-1, \frac{q^{\de}-1}{q-1}\right) = \left( \frac{q^\os-1}{q-1}\right)\gcd\left(q-1,\frac{q^{\de}-1}{q^{\os}-1}\right).
$$
Since $(q^\de-1)/(q^\os-1)\equiv \de/\os\mod q-1$, we have
$$
s_0 = \gcd\left(q-1,\frac{q^{\de}-1}{q^{\os}-1}\right) =\gcd\left(q-1,\frac{\de}{\os}\right).
$$
\end{proof}

We can now count the equivalence classes of \emph{associative} Petit algebras $\Kone{a}$, that is, such that $\os\mid\de$ and $a\in F^\times$.

\begin{corollary}\label{Cor:associativecase}
    Suppose $\os\mid \de$ and set $s_0 = \gcd(q-1,\frac{\de}{\os})$. Then the number of equivalence classes of ambient Petit rings $\Kone{a}$ that are \emph{associative}, meaning, as $a$ runs over $F^\times$, is given by
    $$
    \sum_{d|s_0}\frac{\varphi(d)}{o(p)_d}.
    $$
\end{corollary}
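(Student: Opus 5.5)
The plan is to restrict the orbit-counting argument of Proposition~\ref{T:F-tildeequivalencecount} (with $\tilde{F}=\FF_p$) to the subgroup $F^\times\subset K^\times$. First, equivalence preserves associativity: an equivalence is a ring isomorphism, and $\Kone{a}$ is associative exactly when $a\in F^\times$ (given $\os\mid\de$). So the associative algebras form a union of equivalence classes, and what we must count is the number of equivalence classes that meet $F^\times$.

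Next I identify which Chen classes meet $F^\times$. By Theorem~\ref{L:Chenclassesfinitefield}, the Chen classes are the cosets of $N^\sigma_\de(K^\times)=\langle\xi^s\rangle$ in the cyclic group $K^\times=\langle\xi\rangle$, so they are indexed by $\Z/s\Z$. The subgroup is $F^\times=\langle \xi^{(q^\os-1)/(q-1)}\rangle$. By Lemma~\ref{L:s_0}, $(q^\os-1)/(q-1)$ divides $s$ and
$$
\frac{s}{(q^\os-1)/(q-1)}=s_0 .
$$
Hence $N^\sigma_\de(K^\times)\subseteq F^\times$. The image of $F^\times$ in $K^\times/\langle\xi^s\rangle\cong\Z/s\Z$ is therefore the subgroup generated by $(q^\os-1)/(q-1)=s/s_0$, which is the unique subgroup of order $s_0$. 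In the notation of the proof of Proposition~\ref{T:F-tildeequivalencecount}, the cosets meeting $F^\times$ are those indexed by $i$ with $\gcd(i,s)=s/d$ for some $d\mid s_0$. Equivalently, $F^\times$ is exactly the disjoint union $\bigsqcup_{d\mid s_0}\Xi(d)$.

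Then I count orbits. Each $\Xi(d)$ is $\Gal(K/\FF_p)$-stable, and the Galois action preserves $F^\times$, so nothing escapes the union. The proof of Proposition~\ref{T:F-tildeequivalencecount} shows that $\Xi(d)$ splits into $\varphi(d)/o(p)_d$ Galois orbits of cosets. Each such orbit is one equivalence class, since equivalences $G_{\tau,\alpha}$ exist exactly when $\tau(a)\in bN^\sigma_\de(K^\times)$; this is the parametrization used in that proof. Summing over $d\mid s_0$ gives the formula.

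The only point needing care is the identification of $F^\times$ modulo the norm subgroup with the union of the $\Xi(d)$ for $d\mid s_0$. This reduces to the containment $N^\sigma_\de(K^\times)\subseteq F^\times$ and the index computation above, both of which come straight from Lemma~\ref{L:s_0}.
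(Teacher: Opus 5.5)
Your proposal is correct and follows essentially the same route as the paper. You use Lemma~\ref{L:s_0} to show $N^\sigma_\de(K^\times)\subseteq F^\times$ and that $F^\times$ maps onto the order-$s_0$ subgroup of $\Z/s\Z$, i.e.\ onto $\bigsqcup_{d\mid s_0}\mathbf{R}_d$, and then apply the orbit count from Proposition~\ref{T:F-tildeequivalencecount} with $\tilde{F}=\FF_p$; the paper phrases the same identification as $N^\sigma_\de(K^\times)=(F^\times)^{s_0}$ together with $R^0_d=\mathbf{R}_d$.
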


\begin{proof}
    When $\os\mid\de$, the associative algebras are those for which $a\in F^\times$.  Since
    $$N_\de^\sigma(K^\times) = N_{K/F}(K^\times)^{\de/\os}=(F^\times)^{\de/\os}=(F^\times)^{s_0},
    $$
    and $\Gal(K/\FF_p)$ preserves the cosets space $F^\times/(F^\times)^{s_0}$, we may imitate the argument of the proof of Proposition~\ref{T:F-tildeequivalencecount}.  In particular, since $F^\times = \langle \xi^{\left(\frac{q^\os-1}{q-1}\right)}\rangle$, we have that $F^\times$ is the union of the sets $\mathbf{R}_d$ corresponding to the divisors $d$ of $s_0$, since
    $$R^0_d=\{j\in \ZZ/s_0\ZZ \mid \gcd(j,s_0)=s_0/d\}= \{i=\left(\frac{q^\os-1}{q-1}\right)j\mid \gcd(i,s)=s/d\}=\mathbf{R}_d.$$
    The number of $\Gal(K/\FF_p)$-orbits in these sets was computed in Proposition~\ref{T:F-tildeequivalencecount}, yielding the result.
\end{proof}

We will refer to this corollary in Section~\ref{sec:examples}, when we show that isometry and equivalence do not always coincide.

\subsection{An algorithm to parametrize
the equivalence classes}

Let $\tilde{F}\subset F\subset K$.
If $a,b\in K^\times$ satisfy that $\Kone{a}$ is equivalent to $\Kone{b}$ (via some isomorphism $G_{\tau,\alpha}$) we may write $a\cong_{equiv} b$.  Therefore, our goal is to generate
a set of elements $a\in K^\times$ for the equivalence classes under $\cong_{equiv}$, that is, that correspond to exactly one representative of each equivalence class of Petit algebras $K[t;\sigma]/K[t;\sigma](t^\de-a)$.  This is given in Algorithm~\ref{Algorithm:1}.

We explain it briefly as follows.  Given $\os,\de$, $K=\mathbb{F}_{q^\os}$ and $F=\mathbb{F}_q$, we compute $s = \gcd(q^\os-1, \frac{q^{\de}-1}{q-1})$ and let $\xi$ be a generator of $K^\times$.  If $s=1$ then all ambient algebras $K[t;\sigma]/K[t;\sigma](t^\de - a)$ are Chen-equivalent (hence equivalent) and  all $t^\de-a$ are reducible. In this case we have exactly one equivalence class of  ambient algebras  for skew constacyclic codes, represented by $K[t;\sigma]/K[t;\sigma](t^\de - 1)$ and we are done.

Otherwise, we cycle through the $\Gal(K/\tilde{F})$-orbit of each element $\xi^i$, as $i$ runs from $0$ to $s-1$ (Lemma~\ref{L:Chenclassesfinitefield}), and identify the resulting coset of $N_\de^\sigma(K^\times)$.  If it is not equal to the coset of $\xi^i$, then that coset is removed from the list.  In this way, we retain a unique representative of each of these orbits.

Consequently, Algorithm~\ref{Algorithm:1} generates representatives for the distinct equivalence classes over a chosen subfield $\tilde{F}$ of $F$, thus in particular representatives for the distinct equivalence classes of ambient rings of skew $(\sigma,a)$ constacyclic codes when $\tilde{F}=\mathbb{F}_p$.

\begin{algorithm}
\caption{Generating representatives of the distinct %$1$-
equivalence classes over $\tilde{F}$\label{Algorithm:1}}
\begin{algorithmic}[1]
\Input : $K=\FF_{q^\os}$, $F=\FF_q$, $\tilde{F} = \mathbb{F}_{\tilde{q}}\subset F\subset K$ and a length $\de$
\EndInput
\Output : A set $P \subseteq \{0,1,\ldots,s-1\}$ such that the distinct equivalence classes of Petit algebras over $\tilde{F}$ are precisely $K[t;\sigma]/K[t;\sigma](t^\de - \xi^r)$ as $r$ runs over $P$
\EndOutput
\State $s\gets \gcd(q^\os-1, \frac{q^{\de}-1}{q-1})$
\If{$s=1$}
\State \Return $P=\{0\}$ \Comment{All are (Chen-)equivalent to $\xi^0=1$}
\Else
\State $P \gets \{0\}$ \Comment{Initialization of the set of representatives}
\State $Q \gets \emptyset$ \Comment{Values $1\leq i<s$ for which $\xi^i \in \xi^rN^\sigma_\de(K^\times)$ for some $r\in P$}
\State $k \gets 1$ \Comment{Start with $\xi$, which satisfies $\xi\not\cong_{equiv} 1$}
\While{$k < s$}
    \If{$k \notin Q$} \Comment{If $k$ is not a previous orbit:}
        \State $P \gets P \cup \{k\}$ \Comment{Record $k$ as a distinct orbit}
        \State $j \gets k\tilde{q} \bmod s$ \Comment{Go through its $\Gal(K/\tilde{F})$-orbit}
        \While{$j \notin Q$ \textbf{and} $j \notin P$}
            \State $Q \gets Q \cup \{j\}$
            \State $j \gets j\tilde{q} \bmod s$
        \EndWhile \Comment{Until we loop back to a previously-sorted value}
    \Else
        \State $k \gets k + 1$  \Comment{Advance to next non-sorted value}
    \EndIf
\EndWhile
\State \Return $P$
\EndIf
\end{algorithmic}
\end{algorithm}

\section{Parametrizing isometry classes of Petit rings}\label{sec:isoclasses}

The next question is to further refine this parametrization of Petit rings to one of isometry classes, where isometry is defined as \emph{any} Hamming-weight preserving isomorphism in Definition~\ref{Definition:isometryChenisometryequivalenceChenequivalence}(i).   We begin by developing some theory.

\subsection{Results relating isometry and equivalence}\label{SS:theoremsforisometry}
To begin, note that by Theorem~\ref{T:isometryandequivalencecoincide}, isometry and equivalence coincide for all properly nonassociative Petit algebras, that is, whenever $\os\nmid \de$ and $a \notin F^\times$.  To characterize isometry in the remaining cases, we further specialize a result from \cite{NevinsPumpluen2025submitted} to the case of finite fields to obtain the following more precise criterion.  We write $(F^\times)^z = \{c^z\mid c\in F^\times\}$ for the subgroup of $z$th powers in $F^\times$.

\begin{theorem}\label{T:isometrydifferentfromequivalence}
Let $K=\FF_{q^\os}$ and $F=\FF_q$ be finite fields of characteristic $p$.
Suppose $\os\mid\de$ and set $s_0=\gcd(\de/\os,q-1)$.  Then two \emph{inequivalent} Petit rings $K[t;\sigma]/K[t;\sigma](t^\de-a)$ and $K[t;\sigma]/K[t;\sigma](t^\de-b)$ are isometric over $\FF_p$  if and only if
     $a,b\in F^\times$ and
there exists $\tau \in \Gal(K/\FF_p)$ such that $$\tau(a)\in b^k (F^\times)^{s_0}$$
for some $k\in \NN$ satisfying
\begin{equation}\label{E:conditionsonk}
2\leq k <\de, \quad k\equiv 1 \mod \os, \quad \text{and} \quad \gcd(k,\de)=1.
\end{equation}
\end{theorem}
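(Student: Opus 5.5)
We first reduce to the associative case and then analyze the monomial isometries of higher degree directly. If $a\notin F^\times$ or $b\notin F^\times$, one of the two rings is properly nonassociative. An isomorphism preserves associativity, so both must then be proper. By Theorem~\ref{T:isometryandequivalencecoincide} and the discussion after it (every isomorphism between proper Petit rings with $\Aut(K)$ abelian is of the form $G_{\tau,\alpha}$), any isometry between them is an equivalence. Hence inequivalent but isometric rings force $a,b\in F^\times$. So assume $a,b\in F^\times$, both rings associative, equal to $K[t;\sigma]/\langle t^\de-a\rangle$, and consider an isometry $G=G_{\tau,\alpha,k}$ with $k\neq 1$ (reducing $k$ modulo $\de$ where appropriate).

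The next step is to extract the necessary conditions on $k$. Since $G(t)=\alpha t^k$ and $G$ is a ring map with $G|_K=\tau$, applying $G$ to $tc=\sigma(c)t$ gives
$$\alpha t^k\tau(c)=\tau(\sigma(c))\alpha t^k,$$
that is, $\sigma^k\tau=\tau\sigma$. As $\Gal(K/\FF_p)$ is abelian, this gives $\sigma^{k-1}=\id$, so $k\equiv 1\bmod \os$. Bijectivity and weight preservation require that $t^i\mapsto(\alpha t^k)^i$, for $0\le i<\de$, permutes the monomial positions. In the associative ring, $t^\de=b$ is central, so $(\alpha t^k)^i=N^\sigma_i(\alpha)\,t^{ki}$, which reduces to a scalar times $t^{ki\bmod \de}$. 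These exponents are a permutation of $\{0,\dots,\de-1\}$ exactly when $\gcd(k,\de)=1$. After reducing modulo $\de$ we get $1\le k<\de$, and $k=1$ would give an equivalence, so $2\le k<\de$.

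Now comes the central computation. $G$ is well defined if and only if $G(t^\de)=G(a)$, i.e.
$$(\alpha t^k)^\de=\tau(a).$$
We compute
$$(\alpha t^k)^\de=N^{\sigma^k}_\de(\alpha)\,t^{k\de}=N^\sigma_\de(\alpha)\,b^k,$$
using $\sigma^k=\sigma$ and $t^{k\de}=(t^\de)^k=b^k$ in the associative quotient. So the condition is $\tau(a)=N^\sigma_\de(\alpha)\,b^k$. By the proof of Corollary~\ref{Cor:associativecase},
$$N^\sigma_\de(K^\times)=N_{K/F}(K^\times)^{\de/\os}=(F^\times)^{s_0}.$$
Hence a suitable $\alpha$ exists if and only if $\tau(a)\in b^k(F^\times)^{s_0}$. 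This proves necessity. For sufficiency, given such $\tau$ and $k$, choose $\alpha$ with $N^\sigma_\de(\alpha)=\tau(a)b^{-k}$. The map $G_{\tau,\alpha,k}$ defined on $K[t;\sigma]$ by $t\mapsto\alpha t^k$ respects $tc=\sigma(c)t$ because $\sigma^k=\sigma$, and it sends $t^\de-a$ into the ideal generated by $t^\de-b$. It therefore descends to a ring homomorphism of the quotients, which is $\tau$-semilinear and bijective on monomials by $\gcd(k,\de)=1$. So it is an isometry.

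The main obstacle is the first reduction. We must be sure that in the non-associative case no higher-degree isometry exists even when the other ring might be associative, and handle the "isometric over $\FF_p$" bookkeeping. The other thing to verify carefully is that $k$ may be taken in $[2,\de)$ without loss: the substitution $t\mapsto\alpha t^k$ with $k\ge\de$ must be interpreted via reduction modulo $t^\de-b$, which changes only $\alpha$ by a power of $b\in F^\times$. Such a change keeps the condition inside $b^k(F^\times)^{s_0}$ because $b^\de=(b^{\de/\os})^{\os}$ and $b^{\de/\os}\in(F^\times)^{s_0}$. I would check this congruence of $k$ modulo $\de$ explicitly.
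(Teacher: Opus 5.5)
Your proof is correct, but it takes a more hands-on route than the paper's. The paper proves nothing directly here. It quotes an external characterization (Theorem~4.4 of Nevins--Pumpl\"un) of when a monomial isomorphism $G_{\tau,\alpha,k}$ of degree $k>1$ exists between Petit rings over a commutative ring $S$. The conditions of that theorem already include $\os\mid\de$, $k\equiv 1 \bmod \os$, $\gcd(k,\de)=1$, $a,b\in S_0$ and $N_{S/S_0}(\alpha)^{\de/\os}b^k=\tau(a)$. The paper then only notes that $\Aut(K)$ is abelian, that $N_{K/F}$ is surjective, and that $(F^\times)^{\de/\os}=(F^\times)^{s_0}$.

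You instead re-derive the associative case yourself:
\begin{enumerate}[(i)]
\item compatibility with $tc=\sigma(c)t$ gives $k\equiv 1\bmod \os$;
\item bijectivity on monomials gives $\gcd(k,\de)=1$;
\item the single relation $G(t)^\de=\tau(a)$ gives $\tau(a)=N^\sigma_\de(\alpha)b^k$, which is the cited norm condition because $N^\sigma_\de=N_{K/F}^{\de/\os}$ when $\os\mid\de$;
\item sufficiency comes from the universal property of the quotient by the central element $t^\de-a$.
\end{enumerate}
You also make explicit the normalization $2\le k<\de$, which the paper leaves implicit. This shows where each condition comes from.

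It does not remove the reliance on outside input for $a,b\in F^\times$. Theorem~\ref{T:isometryandequivalencecoincide} is vacuous under the standing hypothesis $\os\mid\de$. What you actually use is the general degree-one claim for proper Petit rings quoted in Section~\ref{subsec:results}, which plays the role of the condition $a,b\in S_0$ in the paper's citation.

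You can make that step self-contained with associators. Take $\os\mid\de$ and $a\notin F$.
\begin{enumerate}[(i)]
\item In $\Kone{a}$ one checks that $[t^i,t^j,t]\neq 0$ if and only if $i+j\ge\de$.
\item In the target, $[t^u,t^v,t^k]\neq 0$ if and only if $u+v\ge\de$, because $\sigma^k(b)=\sigma(b)\neq b$.
\item $G(t^i)$ is a nonzero multiple of $t^{ki \bmod \de}$. Scalars factor out of associators of monomials as nonzero multiples, since $K$ lies in the left and middle nuclei and $\sigma^\de=\id$.
\item Since $\gcd(k,\de)=1$, choose $i\le \de-2$ with $ki\equiv -1 \pmod{\de}$; this is possible exactly when $k\neq 1$.
\end{enumerate}
Then $[t^i,t,t]=0$ but $[G(t^i),G(t),G(t)]\neq 0$, a contradiction, so $k=1$.

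Your remaining worry, that one ring might be associative and the other not, is already settled by your own observation that isomorphisms preserve associativity.
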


\begin{proof}
Two Petit rings are isometric but not equivalent exactly when there exists a Hamming-weight preserving isomorphism of degree strictly greater than one (and no isomorphism of degree equal to one) between them.
    The result \cite[Theorem 4.4]{NevinsPumpluen2025} characterizes the Petit rings over an arbitrary commutative unital ring $S$ such that such an isometry $G_{\tau,\alpha,k}:\Sone{a}\to\Sone{b}$ exists, namely, one must satisfy all of the following:
    \begin{enumerate}[(i)]
    \item $\tau\in \Aut(\ring)$ commutes with $\sigma$;
    \item $\alpha \in \ring^\times$
    \item $\os\mid \de$;
        \item  $k\equiv 1 \mod \os$;
        \item  $\gcd(k,\de)=1$;
                \item $a, b\in \ring_0$;
        \item  $(N_{\ring/\ring_0}(\alpha))^{\de/\os}b^k=\tau(a)$
    \end{enumerate}
In our case, $S=K$ and $S_0=F$.  Then we must satisfy the four conditions (iii) $\os|\de$, (iv) $k\equiv 1 \mod \os$, (v) $\gcd(k,\de)=1$ and (vi) $a,b\in F^\times$, which implies in particular that $\Kone{a}$ is associative.  Since $\Aut(K)=\Gal(K/\FF_p)$ is abelian, (i) always holds.  Since the norm map on finite fields is surjective, $\{N_{\ring/\ring_0}(\alpha))^{\de/\os}\mid \alpha\in \ring^\times\}= (F^\times)^{\de/\os}$, the group of $(\de/\os)$ powers.  Since $F^\times$ is cyclic of order $q-1$, we have $(F^\times)^{\de/\os}=(F^\times)^{s_0}$ where $s_0=\gcd(\de/\os,q-1)$.  Therefore the remaining conditions (ii) and (vii) are satisfied if and only if $\tau(a)\in b^k(F^\times)^{s_0}$.
\end{proof}

Let us now apply this result to derive an algorithm to construct a set of representatives for the isometry classes of Petit algebras.

By Theorem~\ref{T:isometrydifferentfromequivalence}, Algorithm~\ref{Algorithm:1} may overcount isometry classes only if $\os\mid\de$, and then the only parameters $a$ in question are those lying in $F^\times$.

Suppose $\os\mid\de$.
If $K^\times = \langle \xi \rangle$, then $F^\times = \langle \xi^{(q^\os-1)/(q-1)}\rangle$.  Therefore  in the language of Algorithm~\ref{Algorithm:1}, we need to determine which elements of the set
$$
P \cap \frac{q^\os-1}{q-1}\ZZ
$$
correspond to isometric (but inequivalent) Petit algebras.  Here, $P\subset \{0,1,\cdots, s-1\}$ is the set of powers $i$ such that $\xi^i$ runs over the equivalence classes (Galois orbits of cosets of $N_\de^\sigma(K^\times)$), and $\frac{q^\os-1}{q-1}\ZZ$ is the set of powers corresponding to elements of $F^\times$.

We record our first observation as the following lemma.

\begin{lemma}\label{le:s}
If $\os\mid\de$ and $s_0=\gcd(q-1,\de/\os)=1$, %$s=\frac{q^\os-1}{q-1}$ (or equivalently, )
then equivalence and isometry coincide for all ambient Petit algebras of skew constacyclic codes.
\end{lemma}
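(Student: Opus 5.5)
We argue by contradiction using Theorem~\ref{T:isometrydifferentfromequivalence}. Suppose $\os\mid\de$ and $s_0=\gcd(q-1,\de/\os)=1$, and suppose two ambient Petit rings $\Kone{a}$ and $\Kone{b}$ are isometric. We must show they are equivalent. If they were inequivalent, Theorem~\ref{T:isometrydifferentfromequivalence} would give $a,b\in F^\times$, some $\tau\in\Gal(K/\FF_p)$, and some $k$ satisfying \eqref{E:conditionsonk} with $\tau(a)\in b^k(F^\times)^{s_0}$. Since $s_0=1$, we have $(F^\times)^{s_0}=F^\times$, so this membership condition carries no information beyond $a,b\in F^\times$.

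The key step is therefore to show directly that any two $a,b\in F^\times$ give equivalent (indeed Chen equivalent) rings when $s_0=1$. By Corollary~\ref{Cor:associativecase}, the number of equivalence classes of associative rings $\Kone{a}$, $a\in F^\times$, is $\sum_{d\mid s_0}\varphi(d)/o(p)_d$. With $s_0=1$ this sum is $1$, so all such $a$ lie in a single equivalence class.

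To make this concrete, we can instead use the proof of Theorem~\ref{L:Chenclassesfinitefield} together with the proof of Corollary~\ref{Cor:associativecase}: $N^\sigma_\de(K^\times)=(F^\times)^{s_0}=F^\times$. Hence $ab^{-1}\in N^\sigma_\de(K^\times)$, so by \cite[Theorem 4.1]{Pumpluen2025} there is a Chen equivalence $G_{\id,\alpha}$ between $\Kone{a}$ and $\Kone{b}$.

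This contradicts the assumed inequivalence. Thus every isometry class among the associative rings is already an equivalence class. For the non-associative rings ($a\notin F^\times$), isometry and equivalence coincide by Theorem~\ref{T:isometrydifferentfromequivalence}, which requires $a,b\in F^\times$ for an isometric but inequivalent pair. An associative ring is never isomorphic to a non-associative one, so no mixed case arises.

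The only delicate point is the identity $N^\sigma_\de(K^\times)=(F^\times)^{\de/\os}=(F^\times)^{s_0}$, which uses surjectivity of the finite field norm and cyclicity of $F^\times$. It is already recorded in the proof of Corollary~\ref{Cor:associativecase}, so we expect no real obstacle.
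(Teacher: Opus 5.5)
Your proof is correct and is essentially the paper's argument. The paper uses Lemma~\ref{L:s_0} to note that $s_0=1$ forces $s=\frac{q^\os-1}{q-1}$, so $P\cap\frac{q^\os-1}{q-1}\ZZ$ is a singleton; this is the same fact you prove directly, namely that $N^\sigma_\de(K^\times)=F^\times$ and hence all $a\in F^\times$ lie in one (Chen) equivalence class. The only difference is the nonassociative case: the paper cites Theorem~\ref{T:isometryandequivalencecoincide}, but that theorem is stated only for $\os\nmid\de$, so your appeal to Theorem~\ref{T:isometrydifferentfromequivalence} (an isometric but inequivalent pair forces $a,b\in F^\times$) is, if anything, the more apt citation under the hypothesis $\os\mid\de$.
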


\begin{proof}
    Applying Lemma~\ref{L:s_0}, note that the intersection $P \cap \frac{q^\os-1}{q-1}\ZZ$ is a singleton whenever $s_0=1$, since in this case, $s=\frac{q^\os-1}{q-1}$ and $P\subset \{0,1,\cdots,s-1\}$.  In particular, we deduce that isometry classes coincide with equivalence classes for all associative Petit algebras in this case (that is, those with $a\in F^\times$). Together with Theorem~\ref{T:isometryandequivalencecoincide}, which gives that  isometry and equivalence coincide for all properly nonassociative Petit algebras, we have the result.
\end{proof}

Next, note that the conditions on the degree of the isometry $k$ imposed by Theorem~\ref{T:isometrydifferentfromequivalence} are stringent, and the set
\begin{equation}\label{E:S}
S = \{ 1+r\os \mid 1\leq r < \de/\os, \gcd(1+r\os,\de)=1\}
\end{equation}
of elements $k$ satisfying the conditions \eqref{E:conditionsonk}
can be empty.

\begin{example}
    Suppose $\os=3$ and $\de=6$. Then $\{1+r\os\mid 1\leq r < \de/\os=2\} = \{4\}$ but $\gcd(4,\de)=2$, so $S=\emptyset$.  Therefore, the only possible isometries between Petit rings of the form $\Kone{a}$ have degree $1$, that is, are given by equivalence.
\end{example}

Furthermore, some values of $k$ define isometries that can already be realized by an isomorphism that is an equivalence.

\begin{lemma}\label{L:nonewisometries}
    Let $q=p^\gamma$ and $s_0=\gcd(q-1,\de/\os)$. Let $k\in S$, where $S$ was defined in \eqref{E:S}.  Suppose there exists $\ell \in [0,\gamma-1]$ such that $k\equiv p^\ell\mod s_0$.  Then if there exists an isometry of degree $k$ between two ambient Petit algebras $\Kone{a}$ and $\Kone{b}$, for some $a,b\in F^\times$, then $a\cong_{equiv}b$, that is, these algebras are already equivalent (that is, isometric via an isometry of degree $1$).
\end{lemma}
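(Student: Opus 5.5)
Suppose $G_{\tau,\alpha,k}:\Kone{a}\to\Kone{b}$ is an isometry with $k\in S$. By Theorem~\ref{T:isometrydifferentfromequivalence}, or more precisely by conditions (vi)--(vii) in its proof, we have $a,b\in F^\times$ and
\[
\tau(a)\in b^k (F^\times)^{s_0}.
\]
The plan is to exhibit an equivalence $G_{\tau',\alpha'}$ between the two algebras by replacing the power $b^k$ with a Galois conjugate of $b$.

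The key observation is that $F^\times/(F^\times)^{s_0}$ is cyclic of order $s_0$, because $s_0\mid q-1$. Hence for any $c\in F^\times$ the coset of $c^k$ depends only on $k \bmod s_0$. Using the hypothesis $k\equiv p^\ell \pmod{s_0}$, we get
\[
b^k\in b^{p^\ell}(F^\times)^{s_0}.
\]
Let $\phi(x)=x^p$ be the Frobenius, which generates $\Gal(K/\FF_p)$. Then $b^{p^\ell}=\phi^\ell(b)$, so
\[
\tau(a)\in \phi^\ell(b)(F^\times)^{s_0}.
\]
Applying $\phi^{-\ell}$, which preserves $F^\times$ and $(F^\times)^{s_0}$, gives
\[
\tau'(a)\in b\,(F^\times)^{s_0}, \qquad \tau':=\phi^{-\ell}\tau\in\Gal(K/\FF_p).
\]

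Next we convert this membership into an equivalence. As in the proof of Corollary~\ref{Cor:associativecase}, when $\os\mid\de$ we have $(F^\times)^{s_0}=N_\de^\sigma(K^\times)$. So there is $\alpha'\in K^\times$ with
\[
\tau'(a)=N^\sigma_\de(\alpha')\,b.
\]
This is precisely the criterion for $G_{\tau',\alpha'}:\Kone{a}\to\Kone{b}$ to be an equivalence. One way to see this is to invoke the characterization used for equivalences, namely \cite[Theorem 4.1]{Pumpluen2025} extended to $\tau\neq\id$. Alternatively, it is the $k=1$ case of conditions (i)--(vii) listed in the proof of Theorem~\ref{T:isometrydifferentfromequivalence}, noting that $N_{K/F}(\alpha')^{\de/\os}=N^\sigma_\de(\alpha')$. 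Either way we conclude $a\cong_{equiv} b$.

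The main point needing care is this last step: confirming that the cited criterion applies with $k=1$, including the normalization of which side $N^\sigma_\de(\alpha')$ sits on and whether $\tau$ or $\tau^{-1}$ appears. If the convention comes out reversed, we replace $\alpha'$ by a suitable element (for instance, using that $N^\sigma_\de(K^\times)$ is a group stable under $\Gal(K/\FF_p)$), so the conclusion is unaffected. The remaining steps are routine coset arithmetic in the cyclic group $F^\times$. The restriction $\ell\in[0,\gamma-1]$ is not actually needed, since any power of $\phi$ is an automorphism; it merely normalizes $\ell$.
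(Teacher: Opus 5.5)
Your proof is correct and follows the paper's argument: both use $k\equiv p^\ell \pmod{s_0}$ to replace $b^k$ by the Frobenius conjugate $b^{p^\ell}$ modulo $(F^\times)^{s_0}=N_\de^\sigma(K^\times)$, and then apply the degree-one criterion $\tau(a)\in b\,N_\de^\sigma(K^\times)$. The only difference is that the paper passes through the intermediate algebra $\Kone{b^k}$ and uses transitivity of $\cong_{equiv}$, whereas you compose $\phi^{-\ell}$ with $\tau$ directly; your worry about conventions is unnecessary, since condition (vii) with $k=1$ reads exactly $N_\de^\sigma(\alpha')\,b=\tau'(a)$.
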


\begin{proof}
    Suppose we are given such a $k$.  By Theorem~\ref{T:isometrydifferentfromequivalence}, it defines, for each $\alpha \in K^\times$ and $\tau\in \Gal(K/\FF_p)$, the  isometries $G_{\tau,\alpha,k}:\Kone{a}\to \Kone{b}$ where $a$ and $b$ are related by the condition $N_{K/F}(\alpha)^{\de/\os}b^k=\tau(a)$.
    We claim that in fact $a\cong_{equiv}b$.

    Namely, let $\tilde{\tau}\in \Gal(F/\FF_p)$ be given by $\tilde{\tau}(b)=b^{p^\ell}$ for all $b\in F^\times$.  Since $k\equiv p^\ell \mod s_0$, there exists $t\in \ZZ$ such that $p^\ell=k + ts_0$. Recalling that $N_{K/F}(K^\times)^{\de/\os}=\langle \xi^s\rangle = (F^\times)^{s_0}$, we deduce that
    $$b^{p^\ell}=b^k  (b^{s_0})^t$$
    so that $$
    \tilde{\tau}(b)\in b^kN_{K/F}(K^\times)^{\de/\os}.$$
    Therefore, we have that $b^k\cong_{equiv} b$.  On the other hand, $N_{K/F}(\alpha)^{\de/\os}b^k=\tau(a)$ implies directly that $b^k \cong_{equiv} a$.  By transitivity, $a\cong_{equiv}b$, that is, the two Petit algebras were already isomorphic via a simple equivalence.
\end{proof}

\begin{example}
    Suppose $p=2$, $q=4$, $\de=12$ and $\os=4$.  Then $s_0=\gcd(q-1,\de/\os)=3$.  Here, $p\equiv 2\mod s_0$ and $p^2\equiv 1\mod s_0$.  We find $S=\{5\}$ has a single element, but since $k=5\equiv p\mod s_0$, Lemma~\ref{L:nonewisometries} implies that there will not exist any elements $a,b$ giving isometric Petit algebras such that $a\not\cong_{equiv}b$.  Hence, isometry classes and equivalence classes coincide.
\end{example}

\subsection{An algorithm for parametrizing isometry classes}
Summarizing the results from Section~\ref{SS:theoremsforisometry}, we see that it suffices, in the case that $s_0>1$, $S\neq \emptyset$ and $S$ containing elements $k$ that do not satisfy Lemma~\ref{L:nonewisometries}, to determine which of the elements of $P \cap \frac{q^\os-1}{q-1}\ZZ$ yield isometric Petit algebras, and find a set of representatives $P_0$ of the resulting classes.  This is done by removing the elements of $P_0$ that arise as $a^k$ for some $a\in P_0$ and $k\in S$, as in this case, $a$ and $a^k$ yield isometric algebras.

Then a complete set of isometry classes of Petit algebras is parameterized by
\begin{equation}\label{E:replaceP_0}
P_{iso}:=(P \smallsetminus \frac{q^\os-1}{q-1}\ZZ) \cup P_0,
\end{equation}
where $P \smallsetminus \frac{q^\os-1}{q-1}\ZZ$ parametrizes the properly nonassociative Petit algebras up to isometry and $P_0$ parametrizes the associative ones up to isometry.

This process is laid out in Algorithm~\ref{Algorithm:2}.

\begin{algorithm}
\caption{Computation of representatives of the distinct isometry classes over $\FF_p$\label{Algorithm:2}}
\begin{algorithmic}[1]
\Input : $K=\FF_{q^\os}\supset F=\FF_q=\FF_{p^\gamma} \supseteq \FF_p$; $\tau(x)=x^p$ a generator of $\Gal(F/\FF_p)$, of order $\gamma$; a code length $\de$ such that $\os\mid\de$ % and $s=\gcd(q^\os-1, \frac{q^{\de}-1}{q-1})$
\EndInput
\Output : A set $P_{iso} \subseteq \{0,1,\ldots,s-1\}$ such that the distinct isometry classes of Petit algebras over $\FF_p$ are precisely $K[t;\sigma]/K[t;\sigma](t^\de - \xi^r)$ as $r$ runs over $P_{iso}$
\EndOutput
\State $P\gets$ Algorithm~\ref{Algorithm:1} \Comment{Get representatives for all equivalence classes}
\State $s_0 \gets \gcd\left(q-1,\de/\os\right)$  \Comment{Number of equivalence classes from $F$}
\State $S\gets \emptyset$
\For{$1\leq r<\de/\os$} \Comment{Determine the set $S$ of possible degrees $k$ of the isometries}
\If{$\gcd(1+r\os,\de)=1$ {\bf and} $\forall\ell\in[0,\gamma], 1+r\os\neq p^\ell \mod s_0$}
\State $S\gets S\cup \{1+r\os\}$
\EndIf
\EndFor
\If{$\os\nmid\de$ {\bf or} $s_0=1$ {\bf or} $S=\emptyset$}
\State \Return $P_{iso}=P$ \Comment{Isometry and equivalence coincide}
\Else
\State $P_0 \gets P\cap \left\{ 0,\frac{q^{\os}-1}{q-1}, 2(\frac{q^{\os}-1}{q-1}), \cdots, (q-2)\frac{q^{\os}-1}{q-1}\right\}$  \Comment{subset to modify}
\State $i\gets 0$
\While{$i<\text{length}(P_0)$}
    \State $\alpha \gets (P_0)_i$ \Comment{Take the $i$th element of $P_0$}
    \For{$k\in S$ {\bf and} $\ell\in [0,\gamma-1]\in \ZZ$}
        \State $\alpha'\gets \alpha k p^\ell \mod s$ \Comment{Coset representative of $\tau^\ell(a^k)$}
        \If {$\alpha\neq \alpha'$ and $\alpha'\in P_0$} \Comment{if $\alpha'$ is a distinct equivalence class}
            \State $P_0 \gets P_0\setminus \{\alpha'\}$
        \EndIf
    \EndFor
    \State $i\gets i+1$ \Comment{move to next element of $P_0$}
\EndWhile
\State $P_{iso}\gets (P\smallsetminus \left\{ 0,\frac{q^{\os}-1}{q-1}, 2(\frac{q^{\os}-1}{q-1}), \cdots, (q-2)\frac{q^{\os}-1}{q-1}\right\}) \cup P_0$ \Comment{As in \eqref{E:replaceP_0}}
\State \Return $P_{iso}$
\EndIf
\end{algorithmic}
\end{algorithm}

\section{Constructing examples of the difference between isometry and equivalence}\label{sec:examples}

We begin with three observations to refine the set of examples we seek in this section.
\begin{enumerate}[1.]
\item Recall that the skew constacyclic codes of length $\de$  over $\mathbb{F}_{q^\os}$ will be left  ideals of  Petit rings of the form $\Kone{a}$; there are nontrivial $(\sigma,a)$ skew constacyclic codes exactly when  $t^\de-a$ is reducible.

\item When $\Kone{a}$ is isometric with $\Kone{1}$ (see Proposition~\ref{P:allcodesareconstacyclicifprime} and Corollary~\ref{C:allskewequiv} for important cases where this holds very broadly), then
all skew $(\sigma,a)$ constacyclic codes are equivalent to $(\sigma,1)$ constacyclic codes. Thus the interesting new examples are to be found among the remaining isometry classes.

\item Finally, although under various hypotheses (see Theorem~\ref{T:isometryandequivalencecoincide}, Lemma~\ref{le:s} and Lemma~\ref{L:nonewisometries})  we have that isometry and equivalence coincide, this is not the case in general.
\end{enumerate}

Therefore our interest is in rings $\Kone{a}$ such that $t^\de-a$ is reducible, $\Kone{a}$ is not isometric with the ring $\Kone{1}$, and such that there exists $b$ such that $\Kone{a}\not\cong_{equiv}\Kone{b}$ but $\Kone{a}$ and $\Kone{b}$ are indeed isometric.

\subsection{Conditions on \texorpdfstring{$\de,\os,q,p$}{m,n,q,p}}
To generate such examples, let us explore the conditions of Lemma~\ref{L:nonewisometries} more carefully.  The values $k$ that contribute no new isometries are those for which the image of $k\mod s_0$ lies in the set $\mathscr{P}:=\{1,p,p^2,\cdots,p^{\gamma-1}\}\mod s_0$, where   $q=p^\gamma$.  This leads us to the following complement to Theorem~\ref{T:isometryandequivalencecoincide}.

\begin{proposition}\label{prop:proper equivalences}
    Suppose $\os\mid\de$ and set $S = \{ 1+r\os \mid 1\leq r < \de/\os, \gcd(1+r\os,\de)=1\}$.  There exists $a\in F^\times$ such that $\Kone{a}$ and $\Kone{a^k}$ are isometry but not equivalent if
    all the following hold:
    \begin{itemize}
        \item $s_0=\gcd(\de/\os,q-1)>1$;
        \item the subgroup $\mathscr{P}$ of $(\ZZ/s_0\ZZ)^\times$ generated by $p$ is proper (that is, $p$ is not a primitive element);
        \item there exists $k\in S$ %$k\in [2,m)$ such that $k\equiv 1\mod n$, $\gcd(k,m)=1$ and
        such that the image of $k$ in $(\ZZ/s_0\ZZ)^\times$ is not in $\mathscr{P}$.
    \end{itemize}
    Note that the last condition implies $S$ is nonempty and that $s_0\nmid n$.
\end{proposition}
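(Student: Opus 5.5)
We prove the statement by choosing $a$ to be a generator of $F^\times$ and then showing that the obvious degree-$k$ isometry cannot be replaced by an equivalence.

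\textbf{Step 1: choice of $a$ and the isometry.} Fix $k\in S$ whose image in $(\ZZ/s_0\ZZ)^\times$ is not in $\mathscr{P}$. Take $a=\xi^{(q^\os-1)/(q-1)}$, a generator of $F^\times$, and set $b=a^k$. Then Theorem~\ref{T:isometrydifferentfromequivalence}, with $\tau=\id$ and $\alpha=1$, gives the isometry $G_{\id,1,k}$ between $\Kone{a}$ and $\Kone{a^k}$. Indeed, the defining relation $\tau(a)\in b^k(F^\times)^{s_0}$ becomes $a\in (a^k)^k(F^\times)^{s_0}$ if read literally. To avoid this, I will apply the criterion in the direction $\Kone{a^k}\to\Kone{a}$, where it reads $a^k\in a^k(F^\times)^{s_0}$, which holds trivially. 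This is just condition (vii) of the cited theorem with $N(\alpha)=1$. Conditions \eqref{E:conditionsonk} hold because $k\in S$, and $a,a^k\in F^\times$.

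\textbf{Step 2: reduction of equivalence to a condition on cosets.} Suppose $\Kone{a}$ and $\Kone{a^k}$ are equivalent via some $G_{\tau,\alpha}$. By \cite[Theorem 4.1]{Pumpluen2025}, extended with $\tau$ as in the proof of Proposition~\ref{T:F-tildeequivalencecount}, this means $\tau(a)\in a^k N^\sigma_\de(K^\times)$. By the computation in Corollary~\ref{Cor:associativecase}, $N^\sigma_\de(K^\times)=(F^\times)^{s_0}$. Every $\tau\in\Gal(K/\FF_p)$ acts on $F$ as $x\mapsto x^{p^\ell}$ for some $0\le\ell<\gamma$. The condition therefore becomes
$$a^{p^\ell}\in a^k (F^\times)^{s_0}.$$

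\textbf{Step 3: translation to a congruence and contradiction.} Since $a$ generates the cyclic group $F^\times$ of order $q-1$, we have $a^{p^\ell-k}\in(F^\times)^{s_0}$. Because $s_0\mid q-1$, the subgroup $(F^\times)^{s_0}$ consists exactly of the powers $a^{j}$ with $s_0\mid j$. Hence $p^\ell\equiv k\pmod{s_0}$, so the image of $k$ lies in $\mathscr{P}$, contradicting the hypothesis. Note that $k$ is a unit mod $s_0$: $s_0\mid\de$ and $\gcd(k,\de)=1$. So it makes sense to speak of its image in $(\ZZ/s_0\ZZ)^\times$. The first two hypotheses are then automatically needed: if $s_0=1$ or $\mathscr{P}$ were everything, no such $k$ could exist.

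\textbf{Closing remark and main obstacle.} For the final remark: if $S$ is empty there is no $k$. If $s_0\mid\de/\os$ and $s_0\mid\os$ held, then $k=1+r\os\equiv1\pmod{s_0}$, which lies in $\mathscr{P}$. So the third condition forces $s_0\nmid \os$, which is the intended reading of ``$s_0\nmid n$''.

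The main obstacle I expect is Step 2. We must be careful that equivalence allows every $\tau\in\Gal(K/\FF_p)$ together with arbitrary $\alpha$, and that the orbit-of-cosets description from Proposition~\ref{T:F-tildeequivalencecount} exactly captures $G_{\tau,\alpha}$-isomorphism. The argument also depends on orienting the isometry criterion correctly so that the choice $b=a^k$ actually satisfies it.
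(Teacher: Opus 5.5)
Your proof is correct and is essentially the paper's argument. The paper's sufficiency step only asserts that some $a\in F^\times$ satisfies $a^k\notin\tau(a)N_\de^\sigma(K^\times)$ for every $\tau$, so that $G_{\id,1,k}$ is an isometry but not an equivalence; you make this explicit by taking $a$ to be a generator of $F^\times$ and reducing non-equivalence to the congruence $k\not\equiv p^\ell \pmod{s_0}$. Your orientation $G_{\id,1,k}:\Kone{a^k}\to\Kone{a}$ matches the one the paper uses in its examples, and your remark that the first two hypotheses are implied by the third is right, since the paper's ``necessity'' discussion is not needed for the stated implication.
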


\begin{proof}
We first show that each of these conditions are necessary.
    If $s_0=1$ then $N_\de^\sigma(K^\times)=F^\times$ and thus all associative algebras are Chen-equivalent.  If $\mathscr{P}=(\ZZ/s_0\ZZ)^\times$ then the Frobenius map will act transitively on the representatives of the nontrivial cosets of $F^\times/(F^\times)^{s_0}$, whence all associative algebras are equivalent.
The final condition specifies the existence of an integer $k$ such that $G_{\id,1,k}$ defines an isomorphism.  Since $(F^\times)^{s_0k}\subseteq (F^\times)^{s_0}$, the map $a\mapsto a^k$ permutes the coset space $F^\times/(F^\times)^{s_0}$.  However, if all such $k$ lie in $\mathscr{P}$ then this map $a\mapsto a^k$ will always be realized by some element $\tau$ of $\Gal(F/\mathbb{F}_p)$, meaning the resulting algebras are already equivalent, that is, isomorphic via  $G_{\tau,\alpha}$ for some $\alpha\in K^\times$.

If all conditions hold, then there exists an element $a\in F^\times$ and $k\neq 1$ such that $a^k\notin \tau(a)N_\de^\sigma(K^\times)$ for any element $\tau\in\Aut(K)$, whence $\Kone{a^k}$ is not equivalent to $\Kone{a}$ and yet $G_{\id,1,k}$ defines an isometry between these algebras.

The final statement includes the observation that if $s_0\mid \os$, then any $k\in S$ satisfies $k\equiv 1 \equiv p^0\mod s_0$, so that the third condition fails.
\end{proof}

\begin{example}\label{Eg:m=135n=9q=16}
    Let $\de=135$, $\os=9$ and $q=16$, so that $F=\FF_{16}$ and $K=\FF_{2^{36}}$.  Then  $s_0=\gcd(15,135/9)=15>1$.  In particular, $s=s_0(q^\os-1)/(q-1)=q^\os-1$ by Lemma~\ref{L:s_0} so by Theorem~\ref{L:Chenclassesfinitefield} every element of $K^\times$ defines a distinct \emph{Chen-equivalence class}, yielding the maximum possible number $68,719,476,735$ of Chen equivalence classes.
These group, as in Theorem~\ref{them:countingskewconstacodes} into a large number of \emph{equivalence classes}.  To count them, we find the  factorization
$$2^{36}-1
=3^3\cdot5
\cdot7\cdot13
\cdot19\cdot37\cdot73\cdot109.$$
It is immediate that the lower bound of Corollary~\ref{cor:lowerbound} on the number of equivalence classes is $d(2^{36}-1)=512$.  Using SageMath \cite{sagemath} to evaluate the formula of Theorem~\ref{them:countingskewconstacodes} yields the actual value of $1,908,881,899$ equivalence classes,   most of which correspond to nonassociative ambient Petit algebras.

We now turn to the associative algebras, and show that some of their equivalence classes group into larger isometry classes.

Begin by noting that by Corollary~\ref{Cor:associativecase}, the  $q-1=15$ Chen-equivalence classes of \emph{associative} ambient algebras  group into
$$
\sum_{d|s_0} \frac{\varphi(d)}{o(2)_d} = \underbrace{\frac{1}{1}}_{d=1}+\underbrace{\frac{2}{2}}_{d=3}+\underbrace{\frac{4}{4}}_{d=5}+
    \underbrace{\frac{8}{4}}_{d=15}=5
    $$
distinct equivalence classes.

We now group them into isometry classes.
    The set $S$ of \eqref{E:S} is
    $$S=\{19, 28, 37, 46, 64, 73, 82, 91, 109, 118, 127\}.$$
The subset $\mathscr{P}$ of $(\ZZ/s_0\ZZ)^\times$ generated by $p=2$, as in Proposition~\ref{prop:proper equivalences}, is
$
\{1, 2, 4, 8\}$
and thus the elements $k\in S$ whose image in $(\ZZ/s_0\ZZ)^\times$ is not in $\mathscr{P}$ is
$$
\{28, 37, 73, 82, 118, 127\},
$$
whose residues modulo $s_0=15$ are all either $7$ or $13$.
These are therefore the elements that will yield isometries between certain non-equivalent algebras, namely, those corresponding to $a\in F^\times$ such that
$$
a^k \notin \bigcup_{\tau\in \Gal(F/\FF_p)}\tau(a)N_\de^\sigma(K^\times) = \{a, a^2, a^4, a^8\}.
$$
Suppose $F^\times = \langle w \rangle$, that is, $w$ is a primitive element.   Then one  sees directly that $\{a^7, a^{13}\}\cap \{a,a^2,a^4,a^8\}=\emptyset$ if and only if $a=w^\ell$ with $\gcd(\ell,15)=1$.

We conclude that the two equivalence classes of associative ambient algebras that are represented by the primitive element $w$ and its inverse yield isometric ambient algebras, that is,
$$
\Kone{w}\cong \Kone{w^{-1}}
$$
and that this Hamming-weight preserving isometry is not realized by any mere equivalence, that is, any map of the form $G_{\tau,\alpha,1}$.
\end{example}

\subsection{Conditions on \texorpdfstring{$a$}{a}}
The conditions of Proposition~\ref{prop:proper equivalences} are independent of $a\in F^\times$: they depend entirely on (complex) arithmetic conditions arising from the choices of
$$
\de, \os, q, p.
$$
The next step is to isolate such choices for which the rings $\Kone{a}$ have nontrivial ideals, meaning they provide collections of isometric, nonequivalent skew constacyclic codes.

Since $\os\mid\de$, the polynomial $f(t)=t^\de-a$, with $a\in F^\times$, is a central polynomial in $K[t;\sigma]$.  In fact,  it is given by $\tilde{f}(t^{\os})$ where
$$
\tilde{f}(x) = x^{\de/\os}-a
$$
is a polynomial in the commutative ring $K[x]$.
Consequently, if $\tilde{f}(x) \in K[x]$ is reducible, so is $f(t)\in K[t;\sigma]$.

Moreover, if we choose $a\notin N_\de^\sigma(K^\times)=(F^\times)^{\de/\os}$, then
any factorization of $\tilde{f}(x)$ (and $f(t)$) will not be a complete factorization into linear factors. This is now reduced to a factorization problem that can be solved in  SageMath, for example.

\begin{example}
In the setting and notation of Example~\ref{Eg:m=135n=9q=16}, let us choose  $k=28$ and $a=w^6$.  Then  $a^k=w^{168}=w^3\in F^\times$ and $G_{\id,1,k}:\Kone{a^k}
    \to \Kone{a}$ is an isomorphism.  Setting $f(t)=t^{135}-w^3$ and $x=t^9$, we have $\tilde{f}(x)=x^{15}-w^3$, which factors as
    $$
    (x^{15}-w^3)=(x^5 + w)  (x^5 + w^3 + w^2)  (x^5 + w^3 + w^2 + w) \in F[x].
    $$
This corresponds to the factorization
$$
f(t) = (t^{135}-w^3)=(t^{45} + w)  (t^{45} + w^3 + w^2)  (t^{45} + w^3 + w^2 + w) \in K[t;\sigma]
$$
For example, with $k=28$ the factor $t^{45}+w$ is sent to $(t^{28})^{45}+w \in \Kone{a}$, which reduces modulo $t^{135}-w^6$ as
$$
t^{135\times 9}t^45+w = (w^6)^9 t^{45}+w = w^9(t^{45}+w^7)=x^5 + w^3 + w + 1
$$
(using SageMath \cite{sagemath}), which is indeed one of the factors (up to an invertible scalar) of $t^{135}-a$, as expected.
\end{example}

In the following example, we illustrate how an isometry of the ambient Petit algebras can map one ideal to another, and thus send one skew $(\sigma,a)$ constacyclic code isometrically onto another.

\begin{example}
    In the same setting above, if we instead choose $a=w^5$ then $a^k=a$ and $f(t)=t^{135}-a$, then $\tilde{f}(x)=x^{15}-a$ and
    $$
    \tilde{f}(x)=(x^3 + w)  (x^3 + w + 1)  (x^3 + w^2 + w + 1)  (x^3 + w^3 + w + 1)  (x^3 + w^3 + w^2 + 1).
    $$
    This time, choosing the ideal generated by $x^3+w=t^{27}+2$, we find $$G_{\id,1,28}(t^{27}+w)=t^{27\times 28}+w = t^{5\times135+81}+w=(w^5)^5t^{81}+w = w^{10}(t^{81}+w^{6}).$$
Observe that the  degree of this polynomial is not equal to the degree of $f(t)$, but of course the corresponding codes, being isomorphic, are of the same dimension.  Namely, the code is the ideal generated by $t^{81}+w^{6}$, which is also generated by $\gcd(f(t),t^{81}+w^6)$. To compute this, we revert again to the central polynomial $x^9+w^6$, which factors using SageMath as
    $$
(x + w + 1)(x + w^3 + 1)(x + w^3 + w)(x^3 + w^2) (x^3 + w^3 + w + 1).
    $$
Thus $\gcd(\tilde{f}(x),x^9+w^6)=x^3 + w^3 + w + 1$, which yields $t^{27}+w^3+w+1$ as a generator of the corresponding  ideal.
\end{example}

These examples illustrate the abundance of interesting cases of isometries that are not mere equivalences and that have not been taken into account in the prevalent literature so far.

\section{Compliance with ethical standards}

\subsection*{Funding sources} This paper was partially written during the second author's stay as a Simons Professor in Residence at the University of Ottawa in July and August 2025. She gratefully acknowledges the support of CRM and the Simons Foundation. The first author's research is supported by NSERC Discovery Grant RGPIN-2025-05630.

\subsection*{Disclosure of potential conflicts of interest} The authors have no relevant financial or non-financial interests to disclose.

\subsection*{Data Availability Statement} No data were generated or used for this research.

%\bibliographystyle{amsalpha}
%\bibliography{References}
\providecommand{\bysame}{\leavevmode\hbox to3em{\hrulefill}\thinspace}
\providecommand{\MR}{\relax\ifhmode\unskip\space\fi MR }
% \MRhref is called by the amsart/book/proc definition of \MR.
\providecommand{\MRhref}[2]{%
  \href{http://www.ams.org/mathscinet-getitem?mr=#1}{#2}
}
\providecommand{\href}[2]{#2}

\end{document}